\newcommand{\@chapapp}{\relax}%
\title{A Novel Skip Orthogonal List for Dynamic Optimal Transport Problem}
\author{
    Xiaoyang Xu,
    Hu Ding\thanks{Corresponding Author}
}
\authorrunning{X. Xu, H. Ding}
\institute{University of Science and Technology of China\\
\email{xiaoyangxu@mail.ustc.edu.cn,
huding@ustc.edu.cn}}
\begin{document}

\maketitle
\begin{abstract}
  Optimal transport is a fundamental topic that has attracted a great amount of attention from the optimization community in the past decades. In this paper, we consider an interesting  discrete dynamic optimal transport problem:
   can we efficiently update the optimal transport plan when the weights  or the locations of the data points  change? 
  This problem is naturally motivated by several applications in  machine learning. For example, we often need to compute the optimal transport cost between two different data sets; if some changes happen to a few data points, should we re-compute the high complexity cost function or update the cost by some efficient dynamic data structure? We are aware that several dynamic maximum flow algorithms have been proposed before, however, the research on dynamic minimum cost flow problem is still quite limited, to the best of our knowledge.
  We propose a novel 2D Skip Orthogonal List together with some dynamic tree techniques. 
  Although our algorithm is based on the conventional simplex method,
  it can efficiently find the variable to pivot within expected $O(1)$ time,
  and complete each pivoting operation within expected $O(\abs{V})$ time where $V$ is the set of all supply and demand nodes.
  Since dynamic modifications typically do not introduce significant changes,
  our algorithm requires only a few simplex iterations in practice.
  So our algorithm is 
  more efficient than re-computing the optimal transport cost that needs at least one traversal over all $\abs{E} = O(\abs{V}^2)$ variables,
  where $\abs{E}$ denotes the number of edges in the network.
  Our experiments demonstrate that our algorithm significantly outperforms existing algorithms in the dynamic scenarios.
\end{abstract}

\section{Introduction}

The discrete {\em optimal transport (OT)} problem involves finding the optimal transport plan  ``$X$'' that minimizes the total cost of transporting one weighted dataset $A$  to another $B$, given a cost matrix ``$C$''~\cite{peyre2019computational}. The datasets $A$ and $B$ respectively represent the supply and demand node sets, and the problem can be represented as a minimum cost flow problem by adding the edges between $A$ and $B$ to create a complete bipartite graph. The discrete optimal transport problem finds numerous applications in the areas such as image registration~\cite{haker2004optimal}, seismic tomography~\cite{metivier2016measuring}, and machine learning~\cite{torres2021survey}. However, most of these applications only consider static scenario where the weights of the datasets and the cost matrix remain constant. 
Yet, many real-world applications need to consider the dynamic scenarios:

\begin{itemize}
    \item \textbf{Dataset Similarity}. In data analysis, measuring the similarity between datasets is a crucial task, and optimal transport has emerged as a powerful tool for this purpose~\cite{alvarez-melis2020geometric}. Real-world datasets are often dynamic, with data points being replaced, weights adjusted, or new data points added over time. Therefore, it is necessary to take these dynamically changes  into account.
    \item \textbf{Time Series Analysis}. Optimal transport can serve as a metric in time series analysis~\cite{cheng2021dynamical}. The main intuition lies in the smooth transition of states between time points in a time series. The smoothness implies the potential to iteratively refine a new solution based on the previous one, circumventing the need for a complete recomputation.

    \item \textbf{Neuroimage analysis}~\cite{gramfort2015fast, janati2019group}. In the medical imaging applications, we may want to compute the change trend of a patient's organ (e.g., the MRI images of human brain over several months), and the differences are measured by the optimal transport cost. Since the changes are often local and small, we may hope to apply some efficient method to quickly update the cost over the period. 

\end{itemize}

Denote by $V$ and $E$ the sets of vertices and edges in the bipartite network, respectively. 
Existing methods, such as the Sinkhorn algorithm~\cite{cuturi2013sinkhorn} and the Network Simplex algorithm~\cite{orlin1997polynomial}, are not adequate 
 to handle the dynamic scenarios. Upon any modification to the cost matrix, the Sinkhorn algorithm requires at least one Sinkhorn-Knopp iteration to regularize the entire the solution matrix, while the Network Simplex algorithm needs to traverse all edges at least once. Consequently, the time complexities of these algorithms for the dynamic model are $\Omega(\abs{V}^2)$ in general cases.

Our algorithm takes a novel data structure that yields an \( O(s|V|) \) time solution for handling evolving datasets, where \( s \) is determined by the magnitude of the modification. In practice, \( s \) is usually much less than the data size $|V|$, and therefore our algorithm can save a large amount of runtime for the dynamic scenarios.
\footnote{Demo library is available at \url{https://github.com/xyxu2033/DynamicOptimalTransport}}

\subsection{Related Works}
\textbf{Exact Solutions.} 
In the past years, several linear programming based minimum cost flow algorithms have been proposed to address discrete optimal transport problems. The simplex method by \citet{dantzig1955generalized} can efficiently solve  general linear programs. Despite its worst-case exponential time complexity, \citet{spielman2004smoothed} showed that its smoothed time complexity is polynomial. \citet{cunningham1976network} adapted the simplex method for minimum cost flow problems. Further, \citet{orlin1997polynomial} enhanced the network simplex algorithm with cost scaling techniques and \citet{tarjan1997dynamic} improved its complexity to be $\tilde O(\abs{V}\abs{E})$. Recently, \citet{van2021minimum} presented an algorithm based on the interior point method with a time complexity  $O\qty(\abs{E} + \abs{V}^{1.5})$, and \citet{chen2022maximum} proposed a $\abs{E}^{1 + o(1)}$ time algorithm through a specially designed data structure on the interior point method.

\textbf{Approximate Algorithms.} For approximate optimal transport, \citet{sherman2017generalized} proposed a $(1 + \epsilon)$ approximation algorithm in $\epsilon^{-2}|E|^{1 + o(1)}$ time.  \citet{pele2009fast} introduced the FastEMD algorithm that applies classic algorithms on a heuristic sketch of the input graph. Later, \citet{cuturi2013sinkhorn} used Sinkhorn-Knopp iterations to approximate the optimal transport problem by adding the smoothed entropic entry as the regularization term. Recently several optimizations on the Sinkhorn algorithm have been proposed, such as the Overrelaxation of  Sinkhorn~~\cite{thibault2021overrelaxed} and the Screening Sinkhorn algorithm~~\cite{alaya2019screening}.


\textbf{Search Trees and Skip Lists.} Our data structure also utilizes high-dimensional extensions of skip lists to maintain a 2-dimensional Euler Tour sequence. Existing high-dimensional data structures based on self-balanced binary search trees, such as $k$-d tree~\cite{bentley1975multidimensional}, are not suitable as they do not support cyclic ordered set maintenance. Skip lists~\cite{pugh1990skip} as depicted in Figure \ref{fig:skiplistoverview}, which are linked lists with additional layers of pointers for element skipping, is adapted in our context to form skip orthogonal lists.
This skipping technique is later generalized for sparse data in higher dimension~\cite{nickerson1998skip,eppstein2005skip}, but range querying generally requires $O(n^2)$ time where $n^2$ is the number of points in the high dimensional space. On the other hand, our data structure requires expected $O(n)$ time when applied to simplex iterations.

\subsection{Overview of Our Algorithm}

Our algorithm for the dynamic optimal transport problem employs two key strategies:

\textbf{First, the dynamic optimal transport operations are reduced to simplex iterations.}
Our technique, grounded on the Simplex method, operates by eliminating the smallest cycle in the graph. We assume that the modifications influence only a small portion of the result, requiring only a few simplex iterations. It is worth noting that existing algorithms like the Network Simplex Algorithm perform poorly under dynamic modifications as they require scanning all the edges at least once to ensure the correctness of the solution.

\textbf{Second, an efficient data structure is proposed for performing each simplex iteration within expected linear time complexity.}
Our data structure, as shown in Figure \ref{fig:2dskiplistoverview}, employs the Euler Tour Technique. We adapt skip lists to maintain the cyclic ordered set produced by the Euler Tour Technique and introduce an additional dimension to create a Skip Orthogonal List. This structure aids in maintaining the information about 
the \textit{adjusted cost matrix}, which is a matrix that requires specific range modifications and queries.

\begin{figure}[htbp]
\centering
\begin{subfigure}[b]{0.48\linewidth}
    \includegraphics[width = \linewidth]{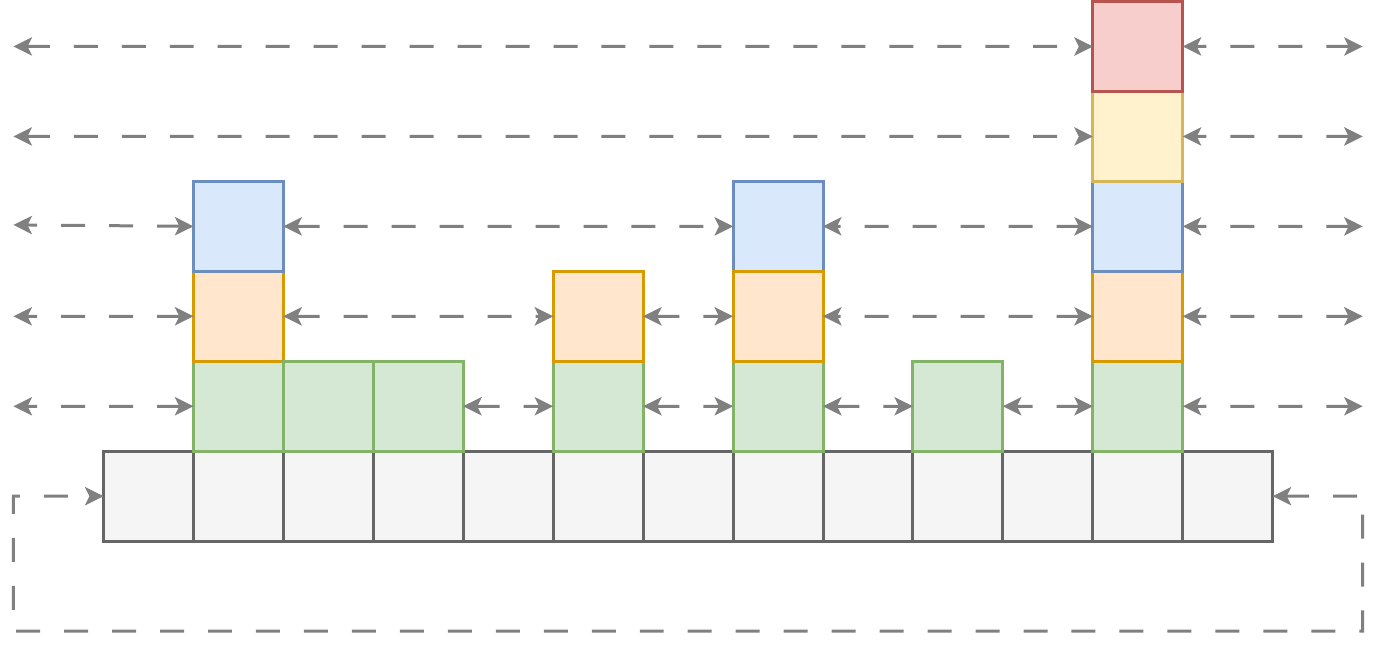}
    \caption{An example for 1D Euler Tour Tree}
    \label{fig:skiplistoverview}
  \end{subfigure}\hfill
  \begin{subfigure}[b]{0.5\linewidth}
      \includegraphics[width = \linewidth]{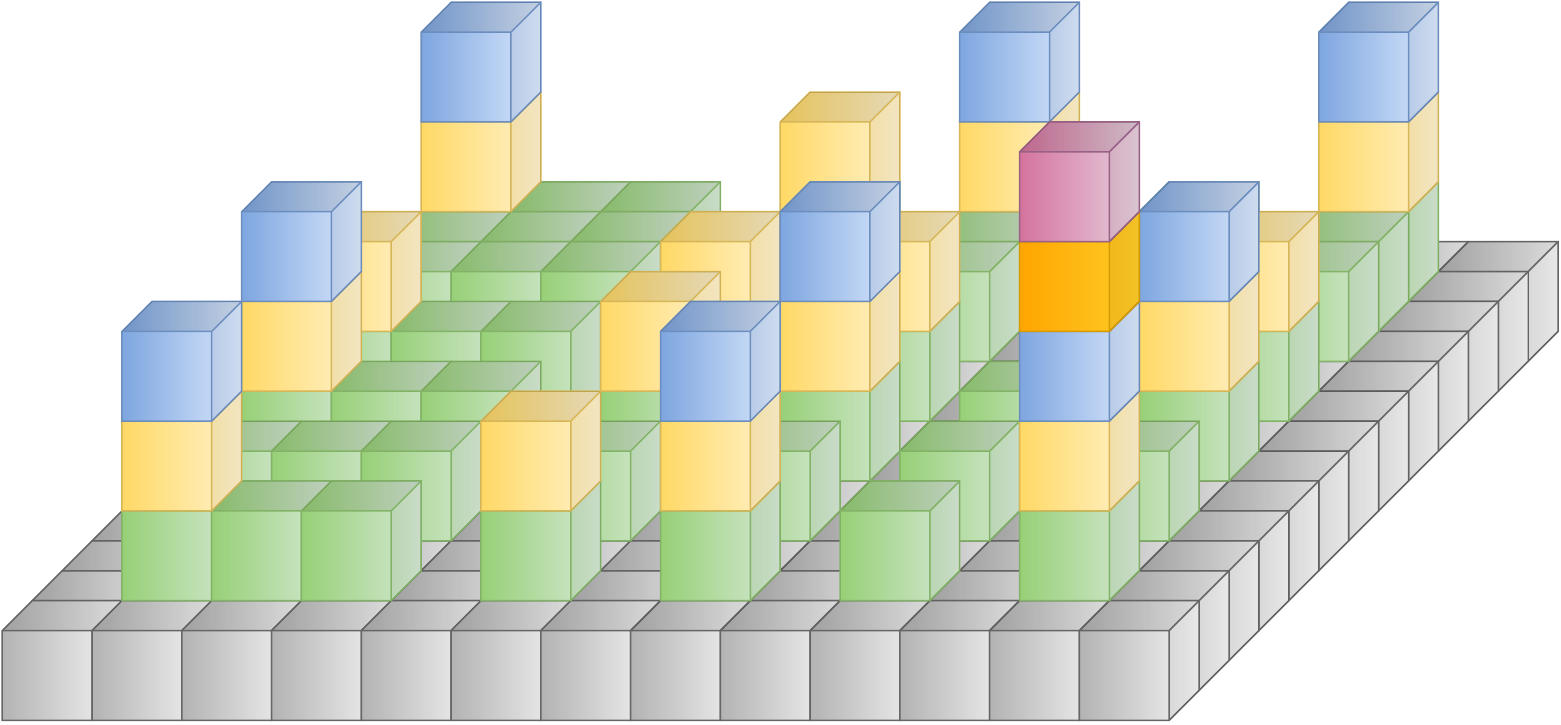}
    \caption{Our 2D Euler Tour Tree}
    \label{fig:2dskiplistoverview}
\end{subfigure}
\caption{Overview of Euler Tour Tree with Skip Lists}
\end{figure}

The rest of the paper is organized as follows. In Section~\ref{sec:pre}, we introduce several important definitions and notations that are used throughout this paper. 
 In Section~\ref{sec:datastructure} we present the data structure \textit{Skip Orthogonal List},
where subsection~\ref{sec:overallstructure} explains how the data structure is organized 
and subsection~\ref{sec:cut} uses \textit{cut} operation as an example to demonstrate  the updates on this data structure.
In Section~\ref{sec:simplex} we elaborate on how to use our data structure to solve the dynamic optimal transport problem. 
Subsection~\ref{sec:dynamicmodel} shows that the dynamic optimal transport model can be reduced to simplex iterations,
and Subsection~\ref{sec:algorithm} shows how our data structure could be used to improve the performance of each simplex iteration.

\section{Preliminaries}
\label{sec:pre}
\subsection{Optimal Transport}
\label{sec:ot}

Let $A$ and $B$ represent the source and target point sets, respectively;  the corresponding discrete probability distributions are $\alpha \in \mathbb R_{\geq 0}^A$ and $\beta\in \mathbb R_{\geq 0}^B$, such that $\sum_{a\in A}\alpha_a = \sum_{b\in B}\beta_b = 1$. The cost matrix is $C \in \mathbb R^{A\times B}$ with each entry $c_{ab}$ denoting the cost of transporting a unit of mass from the point $a\in A$ to the point $b\in B$. The discrete optimal transport problem can be formulated as \eqref{eqn:originalproblem}.

\begin{equation}
  \label{eqn:originalproblem}
  \begin{aligned}
\mathcal W
(\alpha, \beta, C) \triangleq &\min_{X \in \mathbb R_{\geq 0}^{A\times B}} \sum_{a \in A}\sum_{b \in B} c_{ab}x_{ab}\\ \text{subject to} \quad &\begin{cases}\sum_{b\in B}x_{ab} = \alpha_a & \forall a\in A\\\sum_{a\in A}x_{ab} = \beta_b&\forall b\in B\end{cases}
\end{aligned}
\end{equation}

Since Problem \eqref{eqn:originalproblem} is a standard network flow problem, it can be transformed to the following Problem \eqref{eqn:fullproblem} by adding infinity-cost edges~\cite{peyre2019computational}:
\begin{equation}
  \label{eqn:fullproblem}
  \begin{aligned}
  \mathcal W(w, C) \triangleq &\min_{X \in \mathbb R_{\geq 0}^{V\times V}} \sum_{u \in V}\sum_{v \in V} c_{uv}x_{uv} \\ \text{subject to} \quad &\sum_{v\in V}x_{uv} - \sum_{v\in V}x_{vu} = w_u \quad \forall u\in V
  \end{aligned}
\end{equation}
\begin{equation}
\label{eqn:infedge}
  c_{a_1a_2} = c_{b_1a_1}\!\! = c_{b_1b_2} = +\infty \quad \!\!\!\forall a_1, a_2\in A, b_1, b_2\in B
  \end{equation}
  \begin{equation}
  \label{eqn:redefpointweight}
  w_{u} = \begin{cases}\alpha_a & \text{if } u = a\in A\\-\beta_b & \text{if } u = b\in B\end{cases}
\end{equation}

We add the constraint \eqref{eqn:infedge}, and also redefine the point weights as \eqref{eqn:redefpointweight}. Note that in \eqref{eqn:fullproblem}, the input weight $w$ must always satisfy $\sum_{v\in V}w_v = 0$; otherwise, the constraints cannot be satisfied.

We use  $X$ to denote a given basic solution in the context of using the simplex method to solve the Optimal Transport problem.
We notice that the basic variables always form a spanning tree of the
complete directed graph with self loops $G(V, V\times V)$
\cite{cunningham1976network}.
Let the dual variables be $\pi\in \mathbb R^V$, satisfying the following constraint:
\begin{equation}
x_{uv} \text{ is a basic variable } \implies \pi_u - \pi_v = c_{uv}.
\label{eqn:complementaryslackness}
\end{equation}

We then define the adjusted cost matrix $C^\pi \in \mathbb R^{V\times V}$ as $c^\pi_{uv} \triangleq c_{uv} + \pi_v - \pi_u$,
where $c^\pi_{uv}$ represents the simplex multipliers for the linear program~\cite{orlin1997polynomial}.

\subsection{Euler Tour Technique}
\label{sec:ett}
The {\em Euler Tour Technique} is a method for representing a tree $T$ as a cyclic ordered set of linear length~\cite{tarjan1997dynamic}.
Specifically, given a tree $T(V, E_T)$, we construct a directed graph $D_T(V, E_D)$ with the same vertex set as follows:
\begin{itemize}
\item For each vertex $v \in V_T$, add the self-loop $(v, v)$ to $E_D$;
\item For each undirected edge $\qty(u, v) \in E$, add two directed edges $(u, v)$ and $(v, u)$ to $E_D$.
\end{itemize}

Following this definition, $\abs{E_D} = 2\abs{E_T} + \abs{V_T}$. Since $T$ is a tree, $\abs{E_T} = \abs{V} - 1$, and therefore $\abs{E_D} = 3\abs{V} - 2 = O\qty(\abs{V})$. 
Since the difference of In-Degree and Out-Degree of each vertex in $D_T$ is 0, $D_T$ should always contain an Euler Tour. 

\begin{definition}[Euler Tour Representation]
  Given a tree $T$, the Euler Tour representation is an arbitrary sequence of Euler Tour of $D_T$ represented by edges. That is, $E_D$ with circular order induced by the Euler Tour is an Euler Tour representation.
\label{dfn:eulertourrepresent}
\end{definition}

For the rest of the paper, $E_D$ denotes the Euler Tour representation of the tree in the context.

Through Definition \ref{dfn:eulertourrepresent},
we can reduce \textit{edge linking}, \textit{edge cutting}, \textit{sub-tree weight updating} and \textit{sub-tree weight querying} to constant number of \textit{element insertion}, \textit{element deletion}, \textit{range weight modification} and \textit{range weight querying} on a circular ordered set~\cite{tarjan1997dynamic}.
We show in Section~\ref{sec:dynamicmodel} that the dynamic optimal transport can be reduced to the 2D version of 
these four operations. 

\subsection{Orthogonal Lists and Skip Lists}

{\em Skip Lists} are the probabilistic data structures that extend a singly linked list with forward links at various levels, for improving the search, insertion, and deletion operations. Figure~\ref{fig:skiplistoverview} illustrates an example for skip list. Each level contains a circular linked list, where the list at a higher level is a subset of the list at a lower level and the bottom level contains all the elements. The nodes at the same level have the same color and are linked horizontally. The corresponding elements in adjacent lists are connected by vertical pointers. We apply this skipping technique to circular singly linked lists in our work.
Just as most self-balanced binary search trees, Skip Lists support ``lazy propagation'' techniques, allowing range modifications within $O(\log n)$ time, where $n$ is the sequence length maintained by the tree~\cite{sleator1985self}. This technique is commonly used in dynamic trees for network problems~\cite{tarjan1997dynamic}.

A $k$-dimensional {\em Orthogonal List} has $k$ orthogonal forward links (it is a standard linked list when $k = 1$). Orthogonal lists, which can be singly, doubly, or circularly linked, can maintain the information mapped from the Cartesian product of ordered sets, such as sparse tensors~\cite{butterfield2016dictionary}. 
Figure~\ref{fig:orthogonallist}
demonstrates an orthogonal list that maintains a $3\times 4$ matrix. Each node has 2 forward links,
denoted by \textit{row} links (red) and \textit{column} links (blue).
The \textit{row} links connect the elements in each row into a circular linked list horizontally and the \textit{column} links connect the elements in each column into a circular linked list vertically.

\noindent
\begin{figure}[htbp]
  \centering
  \begin{minipage}{0.45\linewidth}
  \includegraphics[width = \linewidth]{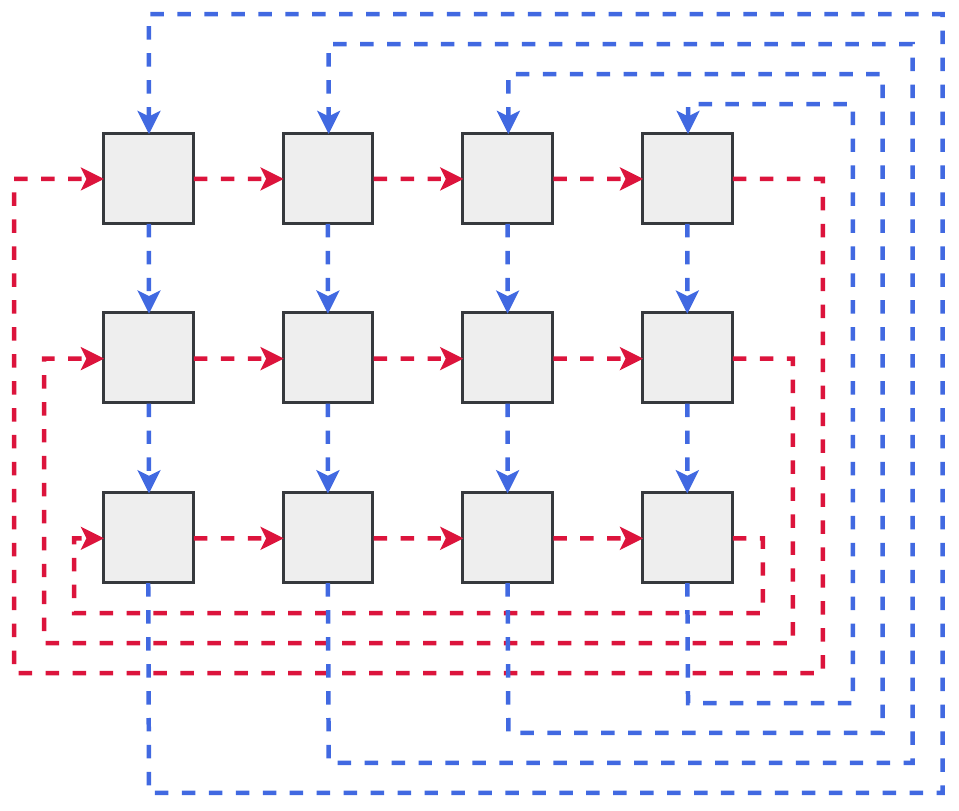}
    \caption{An Orthogonal List Example.}
    \label{fig:orthogonallist}
  \end{minipage}\hfill
  \begin{minipage}{0.55\linewidth}
      \includegraphics[width = \linewidth]{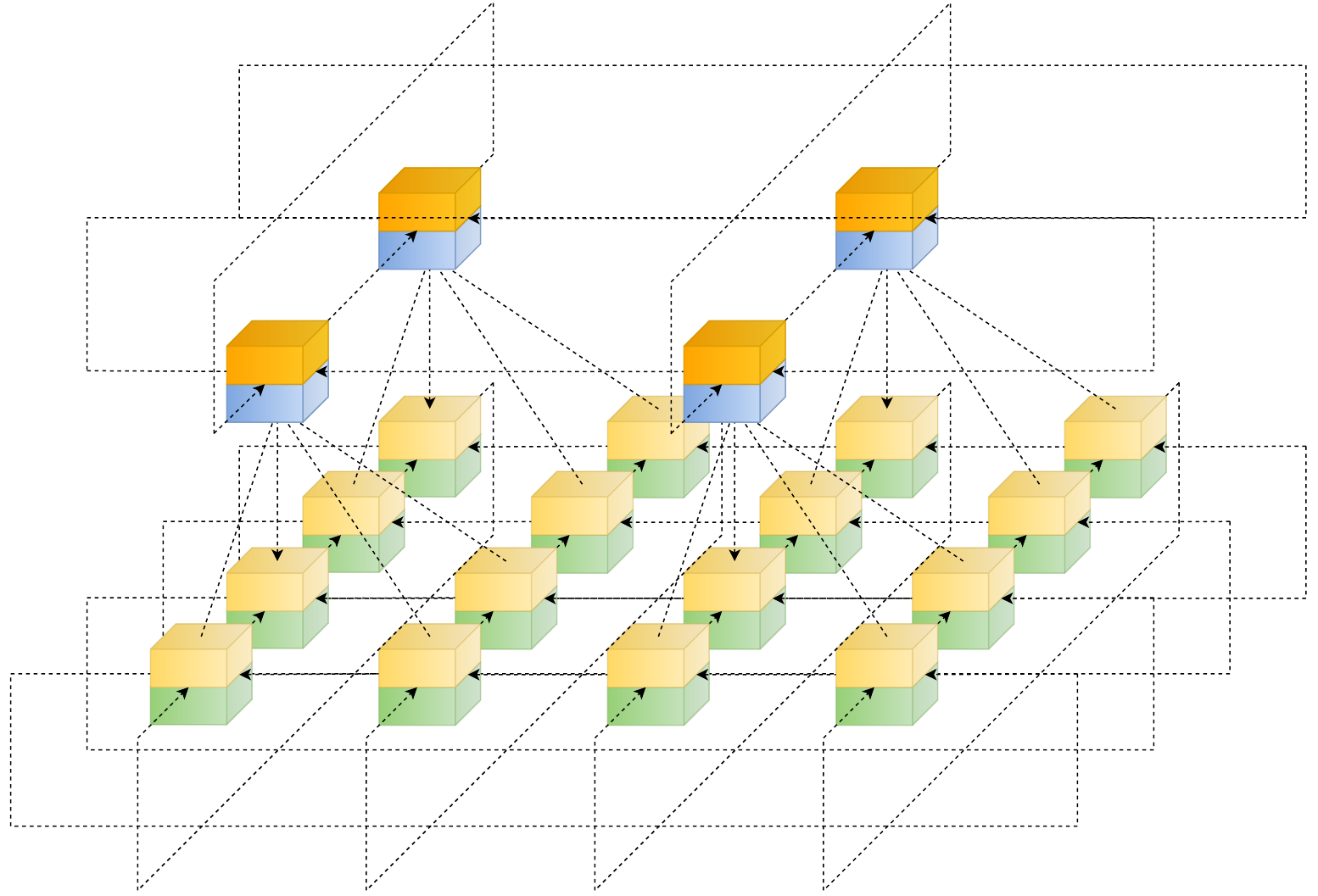}
    \caption{An illustration for Skip Orthogonal List.}
    \label{fig:2dskiplistwithlink}
    \end{minipage}
  \end{figure}

\section{Skip Orthogonal List}
\label{sec:datastructure}
In this section~we introduce our novel data structure \textbf{Skip Orthogonal List}.
In Section~\ref{sec:simplex}, this data structure is used for dynamically updating optimal transport. 
Formally, with the help of a Skip Orthogonal List,
we can maintain a forest
$F(V_F, E_F)$
with at most two trees,
and a matrix $C^\pi \in \mathbb R^{V_F\times V_F}$ that supports the following operations. The first two and the last operations are for the case that $F$ contains only one tree; the other two operations are for the case that $F$ has two trees. 
\begin{itemize}
\item \textbf{Cut}. 
Given an undirected edge $\qty(u, v)$, remove edge $(u, v)$ from the tree, and split it into two disjoint trees. Let the connected component containing $u$ form the vertex set $V_1$, and the connected component containing $v$ form the vertex set $V_2$.
\item \textbf{Insert}. 
Add a new node to $F$ that does not connects with any other node. Let the original nodes form the vertex set $V_1$, and the new node itself form the vertex set $V_2$.
\item \textbf{Range Update}. 
Given $x$, for each $(u, v)\in V\times V$,
update $c^\pi_{uv}$ as equation \eqref{eqn:rangeadd}.
\begin{align}
\label{eqn:rangeadd}
c^\pi_{uv} \gets c^\pi_{uv} + 
\begin{cases}
0 & (u, v)\in V_1\times V_1 \\
-x & (u, v)\in V_1\times V_2\\
x & (u, v)\in V_2\times V_1\\
0 & (u, v)\in V_2\times V_2
\end{cases}
\end{align}
\item \textbf{Link}. 
Given a pair $\{u, v\}$, add the edge $(u, v)$ to the forest; connect two disjoint trees into a single tree, if  $u$ and $v$ are disconnected.
\item \textbf{Global Minimum Query}. 
Return the minimum value of $C^\pi$ on the tree.
\end{itemize}

For the remaining of the section,
we construct a data structure with the expected $O(\abs{V}^2)$  space complexity, where each operation can be done with the expected $O(\abs{V})$ time.
Section~\ref{sec:overallstructure} shows the overall structure of the data structure and how to query in this data structure. Section~\ref{sec:cut} illustrates the cut operation as an example based on this structure. For other operations (linking, insertion, and range updating),
we leave them to appendix \ref{sec:timeproof}.

\subsection{The Overall Structure}
\label{sec:overallstructure}
As shown in Figure~\ref{fig:2dskiplistwithlink}, a \textbf{Skip Orthogonal List} is a hierarchical 
collection of Orthogonal lists, 
where each layer has fewer elements than
the one below it, and the elements are evenly
spaced out.
The bottom layer has all the elements
while the top layer has the least. Formally, it can be defined as Definition \ref{dfn:2dskiplist}.

\begin{definition}
\label{dfn:2dskiplist}
Given a parameter $p$ and a cyclic ordered set $S$,
a \textbf{2D Skip Orthogonal List} $\mathcal L$ over the set $S$ is an infinite collection of 2 Dimensional Circular Orthogonal Lists $\mathcal L = \{L_0, L_1, \cdots\}$, where
\begin{itemize}
    \item $\qty{h(s)}_{s\in S}$ is a set of $\abs{S}$ independent random variables.
    The distribution is a geometric distribution with parameter $p$
    \item For each $\ell\in \mathbb N$,
    let $L_\ell$ be an Orthogonal List whose key contains all the elements in 
    $S_\ell\times S_\ell$,
    where $S_\ell$ is the cyclic sequence formed by $S_\ell\triangleq \qty{s\in S\mid h_s \geq \ell}$
\end{itemize}
\end{definition}

Note that for any pair $u, v\in S$, we use $(u, v)$ to denote the corresponding element in $S\times S$; with a slight abuse of notations, we also use  ``$(u, v)$ at level $\ell$'' to denote the corresponding node  at the $\ell$-th level in the Skip Orthogonal List. If level $\ell$ is not specified in the context,  $(u, v)$ refers to the node at the bottom level.

We use this data structure to maintain several key information of $E_D\times E_D$.
Since $\abs{E_D} = O(\abs{V})$ as discussed in Section~\ref{sec:ett},
similar to conventional 1D Skip Lists,
we know that the space complexity is $O(\abs{V}^2)$ with high probability
in appendix \ref{sec:spaceproof}.

Now we augment this data structure to store some additional information for range updating and global minimum query.
Before that, the concept ``dominate" needs to be adapted to 2D case defined as Definition \ref{dfn:2ddominate}.

\begin{definition}
\label{dfn:2ddominate}
For any positive integer $n$, in a Skip Orthogonal List $\mathcal L$ over the cyclic ordered set $S$,
suppose $(u_1, v_1)$ and $(u_2, v_2)$ are 2 elements in $S\times S$.
We say the node
$(u_1, v_1)$
\textbf{dominates} $(u_2, v_2)$ at level $\ell$ if and only if the following three conditions are all satisfied:
\begin{itemize}
    \item $h(u_1)\geq l$ and $h(v_1)\geq \ell$;
    \item $u_1 = u_2$ or
    $\max_{u = u_1 + 1}^{u_2}h(u) < \ell$;
    \item $v_1 = v_2$ or
    $\max_{v = v_1 + 1}^{v_2}h(v) < \ell$.
\end{itemize}
Here, for any element $s$ in the cyclic ordered set $S$, we use ``$s + 1$'' to denote the successor of $s$ induced by the cyclic order.
\end{definition}

To better understand Definition~\ref{dfn:2ddominate}, we illustrate the examples  in Figure~\ref{fig:dominate} and Figure~\ref{fig:2ddominate}. In each figure,
each blue node dominates itself and all the yellow nodes,
while the red nodes dominate every node in the orthogonal list.

\begin{figure}[htbp]
  \centering
\begin{subfigure}[b]{0.45\linewidth}
\centering
  \includegraphics[width = \linewidth]{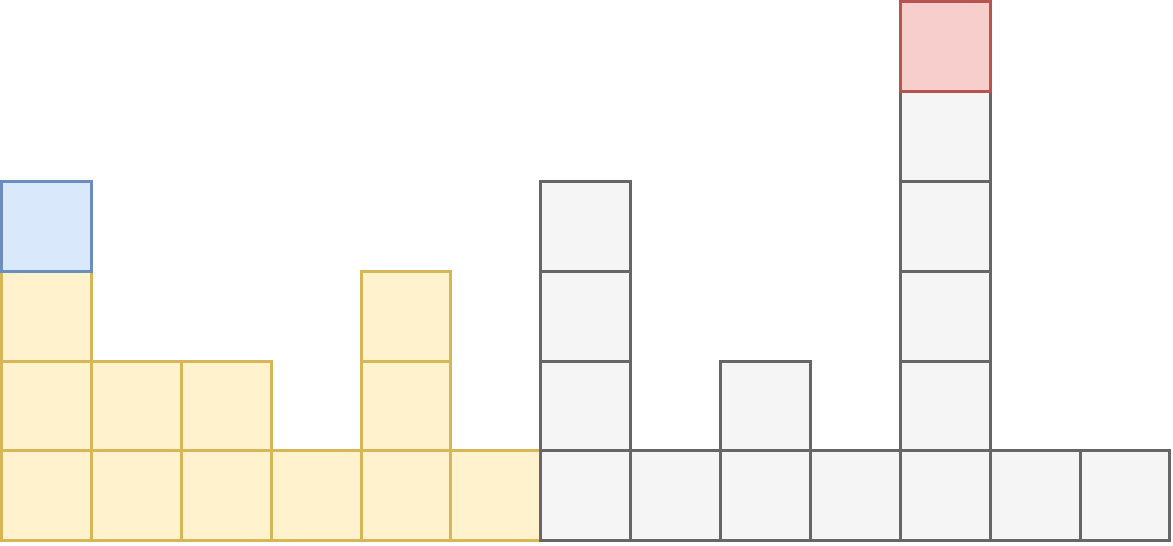}
    \caption{Domination in conventional 1D Skip List}
    \label{fig:dominate}
  \end{subfigure}\hfill
  \begin{subfigure}[b]{0.5\linewidth}
    \centering
    \includegraphics[width = \linewidth]{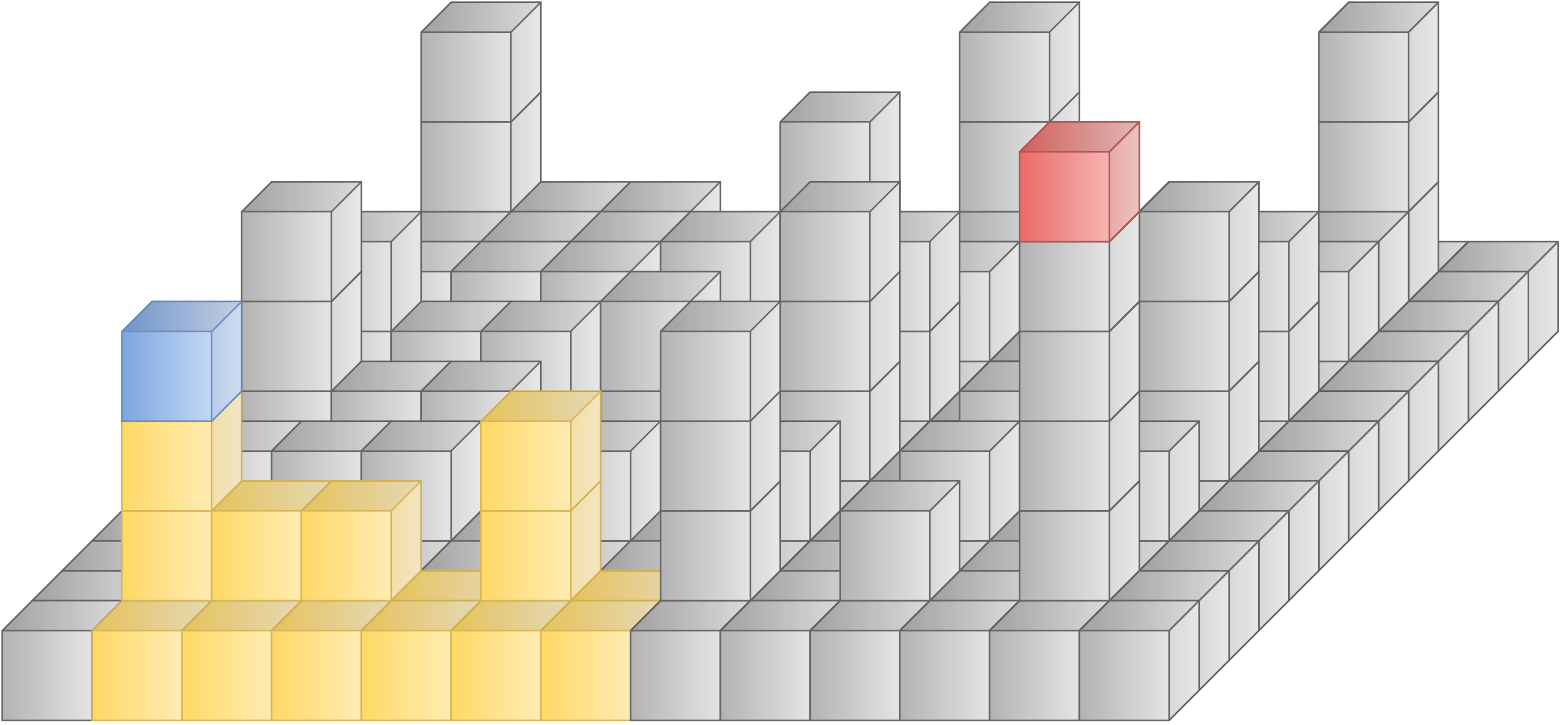}
    \caption{Domination in 2D Skip List}
    \label{fig:2ddominate}
\end{subfigure}
\caption{Illustrations of \textit{Domination} in Skip Lists}
\end{figure}

We now augment the Skip Orthogonal List of
Definition~\ref{dfn:2dskiplist}.
For each node $(u, v)$ at orthogonal list $L_\ell$,
beside the two forward links and two backward links,
we add the following attributes:

\begin{itemize}
    \item \textbf{tag}:  maintains 
    the tag for lazy propagation
    for all the nodes dominated by it;
    \item \textbf{min\_value}:  maintains
    the minimum value
    among all the nodes dominated by it. Note that when $\ell = 0$, it stores the original value of $c^\pi_{uv}$ following lazy propagation technique.
    That is, for any node $x$ in the data structure, after each modification and query, the data structure needs to assure
\begin{align*}
&x.\textit{min\_value} + \sum_{y \textit{ dominates } x}y.\textit{tag}\\
=\ &\min_{(u, v)\in S\times S, x \textit{ dominates } (u, v)} c^\pi_{uv}.
\end{align*}
    \item \textbf{min\_index}: maintains the index corresponding to \textit{min\_value} attribute.
    \item \textbf{child}:  points to $(u, v)$ at the orthogonal list $L_{\ell - 1}$ if $\ell > 0$, and it is invalid if $\ell = 0$;
    \item \textbf{parent}:  points to the node that dominates it if $L_{\ell + 1}$ is not empty.
\end{itemize}

\subsection{The Update Operation: Cut}
\label{sec:cut}
In this subsection, we focus on the update operation ``Cut'' for a 2D Skip Orthogonal List as an example.
Figure~\ref{fig:2dcutflatten} and Figure~\ref{fig:3dcut} illustrate the basic idea of the cutting process.

\begin{figure}[htbp]
  \centering
  \begin{subfigure}[b]{0.35\linewidth}
    \centering
    \includegraphics[width=\linewidth]{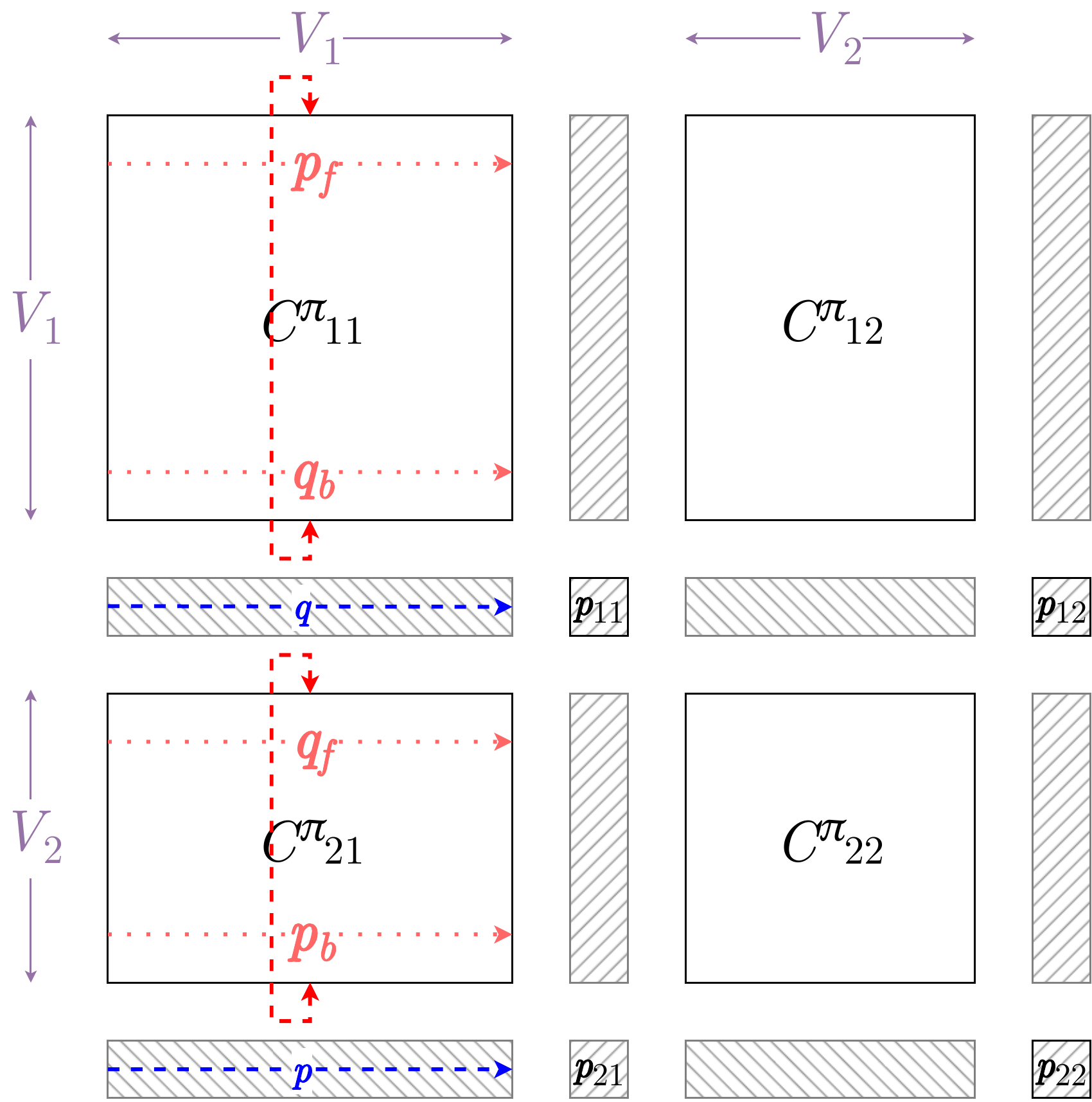}
    \caption{Vertical View (some notations in the figure are defined in appendices)}
    \label{fig:2dcutflatten}
  \end{subfigure}\hfill
  \begin{subfigure}[b]{0.60\linewidth}
    \centering
    \includegraphics[width=\linewidth]{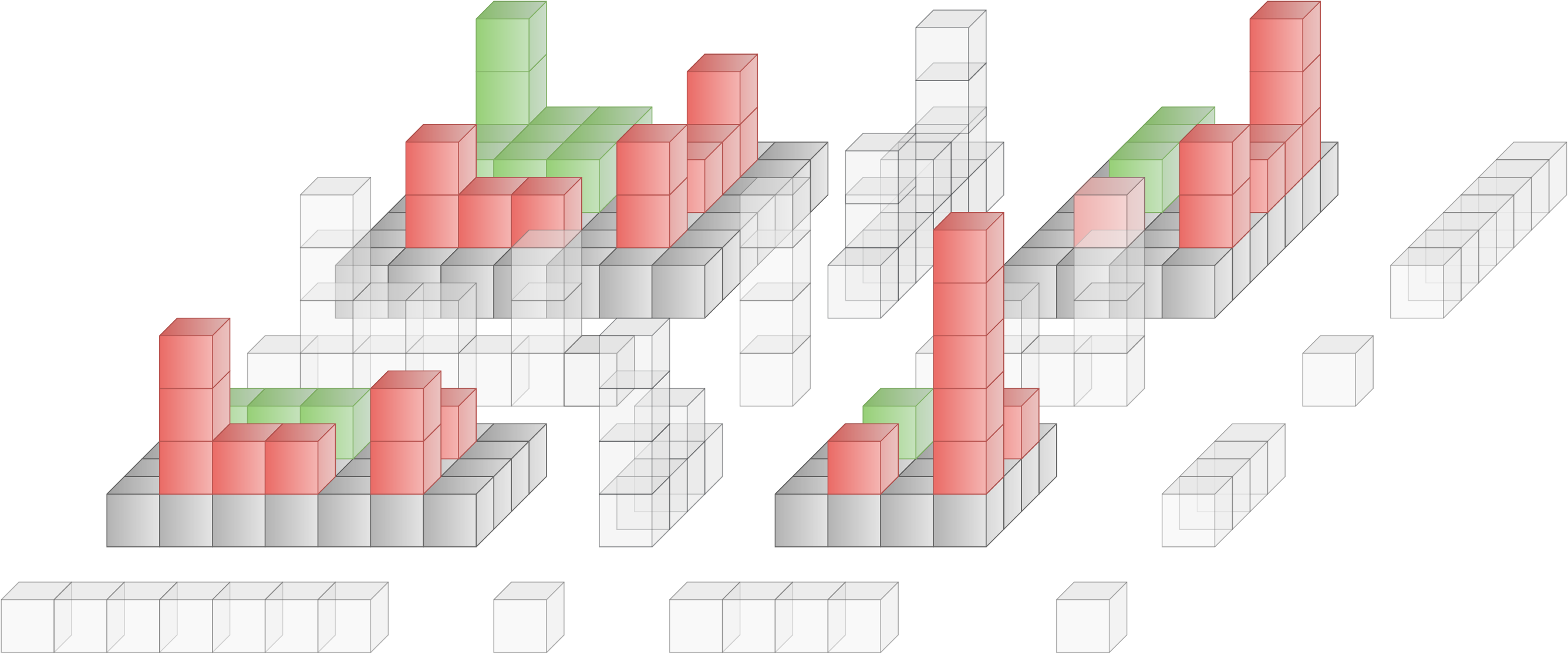}
    \caption{3D View}
    \label{fig:3dcut}
  \end{subfigure}
  \caption{Illustrations of a ``cut" operation}
\end{figure}

Taking an undirected edge $(i, j)$ that needs to be cut as the input, the algorithm can be crudely described as follows:
\begin{enumerate}
    \item  Find the two rows and two columns representing the directed edges $(i, j)$ and $(j, i)$ in $E_D$,
    i.e. the transparent nodes in Figure~\ref{fig:3dcut}
    and the shaded nodes in Figure~\ref{fig:2dcutflatten}.
    \item Push down the \textit{tag} attribute of all the nodes alongside the rows and columns,
    i.e. the red nodes and transparent nodes in Figure~\ref{fig:3dcut}.
    A node $x$ needs to be pushed down,  if some changes happen to the  nodes dominated by $x$.
    \item Cut the rows and columns, warping up the forward links and backward links of points alongside,
    as illustrated in Figure~\ref{fig:2dcutflatten}.
    This operation cuts the original Skip Orthogonal List into four smaller lists.
    \item Update the \textit{min} attribute of the remaining nodes whose \textit{tag} attribute was pushed down in step 2, i.e. the red nodes in Figure~\ref{fig:3dcut}.
    \item Return the four smaller lists that were cut out in step 3.
\end{enumerate}

The generalized
lazy propagation
to our 2D data structure ensures that only the nodes that are ``close" to the two rows and columns are  modified, and consequently the updating time is guaranteed to be low.
Specifically,
the expected time complexity is $O(\abs{V})$ upon each copy of procedure \textsc{Cut}.

\section{Our Dynamic Network Simplex Method}
\label{sec:simplex}
The simplex method performs simplex iterations on some initial feasible basis until the optimal solution is obtained.
The simplex iterations are used for refining the current solution under the dynamic changes.
In each simplex iteration,
some variable with negative simplex multiplier is selected for a copy of procedure \textsc{Pivot}, where 
one common strategy is to pivot in the variable with
the smallest simplex multiplier.
In Section~\ref{sec:dynamicmodel} we focus on defining the dynamic optimal transport operations and using simplex iterations 
to solve this problem,
while in Section~\ref{sec:algorithm} we analyze the details in each simplex iteration.
Our method is presented in the context of the conventional Network Simplex algorithm~\cite{cunningham1976network, orlin1997polynomial}.

\subsection{Dynamic Optimal Transport Operations}
\label{sec:dynamicmodel}
In an Optimal Transport problem, suppose the nodes in node set $V$ are located in some metric space $\mathcal{X}$, e.g., the Euclidean Space $\mathbb R^k$. The edge cost $c_{uv}$ is usually defined as the (squared) distance between $u$ and $v$ in the space. 
Let $w\in\mathbb R^V$ denote the weight vector as defined in the equation \eqref{eqn:redefpointweight}.
A \textbf{Dynamic Optimal Transport} algorithm should support the following four types of update as well as  online query:
\begin{itemize}
    \item \textbf{Spatial Position Modification.} Select some supply or demand point $v\in V$ and move $v$ to another point
    $v'\in \mathcal X$.
    This update usually results in the modification on an entire row or column in the cost matrix $C \in \mathbb R^{A\times B}$.
    \item \textbf{Weight Modification.} Select a pair of supply or demand points $u, v\in V$ with some positive number 
    $\delta\in\mathbb R_+$.
    Then update $w_u\gets w_u - \delta$ and $w_v\gets w_v + \delta$.
    \item \textbf{Point Deletion.} Delete
    a point $v\in V$ with $w_v = 0$ (before performing deletion, its weight should be already modified to be $0$ via the above ``weight modification'', due to the requirement of weight balance for OT).
    
    \item \textbf{Point Insertion.} Select a point $v\notin V$; 
    let $w_v = 0$ and insert $v$ into set $V$ (after the insertion, we can modify its weight from $0$ 
 to a specified value via the ``weight modification''). 
    \item \textbf{Query.} Answer the current Optimal Transport plan and the cost.
\end{itemize}

These updates do not change the overall weights in the  supply and demand sets,
and thus $\sum_{v\in V}w_v\equiv 0$ and a feasible transport
plan always exists.
Therefore we can reduce these updates to the operations on simplex basis, and we explain the ideas below:

\begin{itemize}
\item \textbf{Spatial Position Modification.} The original optimal solution is primal feasible but not primal optimal, i.e. not dual feasible.
We perform the primal simplex method based on the original optimal solution.
When moving a point $v$,
we first update the cost matrix $C$, the dual variable $\pi$ and the modified cost $C^\pi$ to meet the  constraint~\eqref{eqn:complementaryslackness}.
After that, we repeatedly perform the simplex iterations
as long as
the minimum value of the adjusted cost $C^\pi$ is negative.

\item \textbf{Weight Modification.} The original optimal solution is dual feasible but not primal feasible.
We perform the dual simplex iterations based on the original optimal solution.
Suppose we attempt to decrease $w_u$ and increase $w_v$ by $\delta$. To implement this, 
we send $\delta$ amount of flow from $u$ to $v$ in the residual network by the similar manner of the shortest path augmenting method~\cite{edmonds1972theoretical}. Specifically, 
we send the flow through basic variables.
If some variable needs to be pivoted out before the required amount of flow is sent,
we pivot in the variable with the smallest adjusted cost, and repeat this process.

\item \textbf{Point Deletion \& Point Insertion.} As the deleted/inserted point has weight $0$ (even if the weight is non-zero, we can first perform the ``weight modification'' to modify it to be zero), whether
inserting or deleting the point does not influence
our result. 
We maintain a node pool keeping  
all the supply and demand nodes with 0 weight. Each
\textbf{Point Insertion} operation takes some point from
this pool and move it to the correct spatial location (i.e., insert a new point), 
while each \textbf{Point Deletion} operation returns
a node to the pool.
\end{itemize}

Our solution updates the optimal transport
plan as soon as an update happens,
so we can answer the query for the optimal transport plan and value online.
If the number of modified nodes is not large, intuitively the optimal transport plan
should not change much, and thus we only need to run a small number of simplex
iterations to obtain the OT solution.
Assume we need to run $s$ simplex iterations, where we assume
$s\ll \abs{V}$. Then the time complexity of our algorithm
is $O(s\cdot {\rm Time}_{\rm iter})$ with ${\rm Time_{iter}}$ being the time of each simplex iteration.

\subsection{The Details for Simplex Iteration}
\label{sec:algorithm}

As discussed in
Section~\ref{sec:dynamicmodel},
the dynamic operations on OT can be effectively reduced
to simplex iterations.
In this section, we review  the operations used in
the conventional network simplex algorithm, and show how to use the data structure designed in Section~\ref{sec:datastructure} for maintaining $C^\pi$. 
The conventional network simplex method relies on the simplex method
simplified by some graph properties.
A (network) simplex iteration contains the following steps:
\begin{enumerate}
    \item \textbf{Select Variable to Pivot in.} Select the variables with the smallest adjusted cost $C^\pi$ to pivot in.
    Denote by $x_{i_{\rm in}j_{\rm in}}$ the selected one to be pivoted in.
    \item \textbf{Update Primal Solution.}
    Adding the new variable to the current basis forms a cycle. We send the circular flow in the cycle, until some basic variable in the reverse direction runs out of flow, , through Graph Search (e.g. Depth First Search) or Link/Cut Tree~\cite{sleator1981data}.
    Denote that node as $x_{i_{\rm out}j_{\rm out}}$, which is  to be pivoted out.
    \item \textbf{Update Dual Solution.} Update the dual variables $\pi$ and modified cost $C^\pi$ to meet the constraint \eqref{eqn:complementaryslackness},
    as the new basis, because $C^\pi$ will soon be queried in the next simplex iteration.
\end{enumerate}

The selecting step performs a query on the data structure on $C^\pi$ for the minimum element,
and the dual updating performs an update on the data structure.
Though the primal updating step can be done within time $O(\log\abs{V})$~\cite{tarjan1997dynamic},
the conventional network simplex maintains $C^\pi$ through brute force. That is, 
the conventional network simplex brutally traverses through all the adjusted costs and selects the minimum,
and updates the adjusted cost one by one after the dual solution is updated.
This indicates that the time complexity of each simplex iteration is $O(\abs{V}^2)$.
Our goal is to reduce this complexity; in particular, we aim to maintain $C^\pi$ so that it can answer the global minimum query and perform update when the primal basis changes.
In the simplex method, when we decide to pivot in the variable $x_{i_{\rm in}j_{\rm in}}$,
we update the dual variables as the following equation \eqref{eqn:updatedual},
\begin{equation}
\label{eqn:updatedual}
    \pi_u'\gets \pi_u + \begin{cases}c_{i_{\rm in}j_{\rm in}}^\pi & u\in V_1\\0 & u\in V_2\end{cases}
\end{equation}
where $V_1$ is the set of nodes connected to $i_{\rm in}$ and $V_2$ is the set of nodes connected to $j_{\rm in}$ after the edge $x_{i_{\rm out}j_{\rm out}}$ is cut;
$\pi$ and $\pi'$ are respectively the dual solution before and after pivoting, where $\pi_u$ and $\pi'_u$ are the entries corresponding to the node
 $u$, for each $u \in V$.

Based on the definition of the adjusted cost matrix $C^\pi$,
 the update objective can be formulated as below:
 \begin{equation*}
c^{\pi'}_{uv} = c_{uv}^\pi + \pi'_u - \pi'_v= c^\pi_{uv} + \begin{cases}
0 & (u, v)\in V_1\times V_1\\
-c^\pi_{i_{\rm in}j_{\rm in}} & (u, v)\in V_1\times V_2\\
c^\pi_{i_{\rm in}j_{\rm in}} & (u, v)\in V_2\times V_1\\
0 & (u, v)\in V_2\times V_2
\end{cases}
\end{equation*}
where $C^\pi$ is the adjusted cost matrix with regard to the old dual variables $\pi$ while $C^{\pi'}$ regards the new dual variables $\pi'$.
We present the details for updating the adjusted cost matrix in
Algorithm~\ref{alg:postpivotingupdate}.
Our Skip Orthogonal List presented in Section~\ref{sec:datastructure} is capable of performing the operations \textit{cut}, \textit{add} and \textit{link} in $O(\abs{V})$ time.
Therefore we have the following Theorem \ref{thm:main}.

\begin{theorem}
    \label{thm:main}
    Each simplex iteration in the conventional network simplex can be completed within expected $O(\abs{V})$ time.
\end{theorem}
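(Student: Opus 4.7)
The plan is to decompose each simplex iteration into a constant number of operations on the Skip Orthogonal List maintaining $C^\pi$, and then appeal to the per-operation expected $O(|V|)$ bounds already established in Section~\ref{sec:datastructure}. Concretely, I would audit the three phases of an iteration: (i) selecting the pivot-in variable $x_{i_{\rm in}j_{\rm in}}$, (ii) performing the primal update that identifies the pivot-out variable $x_{i_{\rm out}j_{\rm out}}$ and adjusts the flow, and (iii) updating the dual variables so that $C^{\pi'}$ satisfies the complementary slackness constraint~\eqref{eqn:complementaryslackness}.

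For phase (i), I would observe that the Skip Orthogonal List augments every node with the \emph{min\_value} and \emph{min\_index} attributes under lazy propagation, so that the top-level node dominates the entire matrix $C^\pi$. Thus a Global Minimum Query returns $(i_{\rm in}, j_{\rm in})$ after descending from the top node while accumulating the inherited \emph{tag} corrections, which is at worst $O(\log |V|)$ in expectation and therefore absorbed by $O(|V|)$. For phase (ii), finding the unique cycle formed by tentatively adding $(i_{\rm in}, j_{\rm in})$ to the current spanning tree, computing residual capacities along it, and identifying $x_{i_{\rm out}j_{\rm out}}$, can be handled either by a direct Euler-tour traversal or by Sleator--Tarjan link/cut trees in $O(\log |V|)$, as already noted in the text; in either case the cost is $O(|V|)$.

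Phase (iii) is the crux, and will be the main obstacle: I must show that the block structure of the dual update in the displayed formula for $c^{\pi'}_{uv}$ can be realized by a constant-length sequence of Skip Orthogonal List primitives. The key observation is that the four-block pattern coincides exactly with the semantics of Range Update in~\eqref{eqn:rangeadd} once the forest has been separated into the two trees on $V_1$ (containing $i_{\rm in}$) and $V_2$ (containing $j_{\rm in}$). Accordingly, I would execute: first, a Cut on $(i_{\rm out}, j_{\rm out})$, which by Section~\ref{sec:cut} splits the spanning tree into exactly these two components in expected $O(|V|)$ time; second, a Range Update with parameter $x = c^\pi_{i_{\rm in}j_{\rm in}}$, which applies the $+x/-x/0$ pattern to the four quadrants and thus realizes the update rule above; and third, a Link on $(i_{\rm in}, j_{\rm in})$ to reassemble the spanning tree of the new basis.

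Combining the three phases, the iteration invokes at most one Global Minimum Query, one primal-side cycle traversal, one Cut, one Range Update, and one Link, each of expected cost $O(|V|)$. Summing these constant many contributions yields the claimed expected $O(|V|)$ bound per iteration. The only genuinely new verification is the block-structure correspondence in phase (iii); once that is in place, the theorem follows by direct aggregation from the data-structure bounds.
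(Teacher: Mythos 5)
Your proposal is correct and follows essentially the same route as the paper: it decomposes the iteration into pivot selection (a Global Minimum Query), the primal cycle update, and the dual update realized as the sequence Cut on $(i_{\rm out}, j_{\rm out})$, Range Update by $\pm c^\pi_{i_{\rm in}j_{\rm in}}$ on the off-diagonal pieces, and Link on $(i_{\rm in}, j_{\rm in})$, which is precisely Algorithm~\ref{alg:postpivotingupdate}, and then aggregates the per-operation expected $O(\abs{V})$ bounds. The only cosmetic difference is your description of the minimum query as a top-down descent costing $O(\log\abs{V})$, whereas the paper scans the expected-$O(1)$-size top layer directly; either way the cost is absorbed.
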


\begin{algorithm}[htbp]
\caption{Adjusted Cost Matrix Update}
\label{alg:postpivotingupdate}
\begin{algorithmic}[1]
\Require{Adjusted cost matrix $C^\pi$ corresponding to old solution $\pi$, entering variable $x_{i_{\rm in}j_{\rm in}}$, leaving variable $x_{i_{\rm out}j_{\rm out}}$}
\Ensure{Updated adjusted cost matrix $C^{\pi'}$ corresponding to new dual solution $\pi'$}
\State $t\gets c^\pi_{i_{\rm in}j_{\rm in}}$.
\State Cut the edge $\qty(i_{\rm out}, j_{\rm out})$ in $C^\pi$ and denote the 4 pieces as $C^\pi_{11}, C^\pi_{12}, C^\pi_{21}, C^\pi_{22}$.
\State Range update on $C^\pi_{12}$ with increasing all the entries by $-t$.
\State Range update on $C^\pi_{21}$ with increasing all the entries by $t$.
\State Link the pieces $\{C^\pi_{11}, C^\pi_{12}, C^\pi_{21}, C^\pi_{22}\}$ by the edge $x_{i_{\rm in}j_{\rm in}}$,  and obtain the adjusted $C^{\pi'}$ as the output.
\end{algorithmic}
\end{algorithm}

\section{Experiments}

All the experimental results are obtained on a server equipped
with \textit{512GB} main memory of frequency \textit{3200 MHz};
the data structures are implemented in \textit{C++20}
and compiled by \textit{G++ 13.1.0} on \textit{Ubuntu 22.04.3}.
The data structures are compiled to shared objects by \textit{PyBind11}~\cite{pybind11} to be called by \textit{Python 3.11.5}.
Our code uses the Network Simplex library from~\cite{BPPH11} to obtain an initial feasible flow.

In our experiment, we use the \textbf{Network Simplex} algorithm~\cite{orlin1997polynomial} and \textbf{Sinkhorn} algorithm~\cite{cuturi2013sinkhorn} from the
Python Optimal Transport (POT) library~\cite{flamary2021pot}.
We test our algorithm for both the \textbf{Spatial Position Modification} and \textbf{Point Insertion} scenarios as described in Section~\ref{sec:dynamicmodel}.
We take running time to measure their performances.


\textbf{Datasets.} We study the performance of our algorithm on both synthetic  and real datasets.
For synthetic datasets,
we construct a mixture of two Gaussian distributions in $\mathbb R^{784}$,
where the points of the same Gaussian distribution share the same label.
We also use the popular real-world dataset MNIST~\cite{lecun2010mnist}.
We partition the labels into two groups and compute the optimal transport between them.


\textbf{Setup.} We set the Sinkhorn regularization parameter $\lambda$ as $0.1$ and scale the median of the cost matrix $C$ to be 1.
We vary the the node size $\abs{V}$ up to $4\times 10^4$.
For each dataset, we test the static running time of POT on our machine and executed each dynamic operations 100 times to calculate the means of our algorithm.
For \textit{Spatial Position Modification},
we randomly choose some point and add a random Gaussian noise with a variance of 0.5 in each dimension to it.
For \textit{Point Insertion},
we randomly select a point in the dataset that is outside the current OT instance to insert.
We perform a query after these updates and compare with the static algorithms implemented in POT.

\textbf{Result and Analysis.} We illustrate our experimental results in Figure~\ref{fig:result}.
In the dynamic scenarios,
our algorithm is about 1000 times faster than static Network Simplex algorithm and 10 times faster than Sinkhorn Algorithm
when the size $\abs{V}$ reaches 40000,
and reveals stable performance in practice.
As the number of nodes grow larger, the advantage of our algorithm becomes more significant.
This indicates that our algorithm is fast in the case when the number of simplex iterations is not large, as discussed in
Section~\ref{sec:dynamicmodel}.
Also, the running time of our algorithm is slightly higher than linear trend.
Though each simplex iteration in our method is strictly linear in expectation theoretically,
this could be influenced by several practical factors, such as the increment in number of simplex iterations, or the decrement in cache hit rate as the node size grows larger.
\begin{figure}[htbp]
  \centering
  \includegraphics[width = \linewidth]{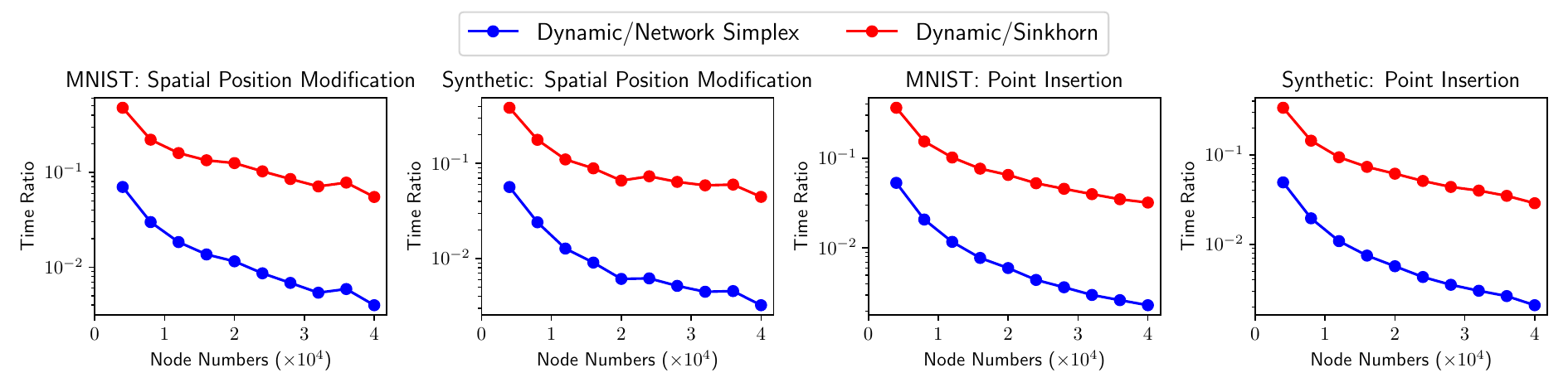}
  \caption{The ratio of the execution time of our dynamic algorithm to that of the static algorithms.}
  \label{fig:result}
\end{figure}

\section{Conclusion and Future Work}
In this paper, we propose a dynamic data structure for the  traditional network simplex method.
With the help of our data structure,
the time complexity of the whole pivoting process
is $O(\abs{V})$ in expectation. 
However, our algorithm lead to several performance issues in practice.
First, as our algorithm stores the entire 2D Skip Orthogonal List data structure,
it may take relatively high space complexity.
Second, as our algorithm is based on linked data structures,
the cache hit rate is not high.
An interesting future work for improving our implementation is to develop new algorithms and data structures with similar complexity but being more memory friendly.

\begin{appendices}
\renewcommand{\thesection}{\Alph{section}}%
\renewcommand{\thesubsection}{\thesection.\arabic{subsection}}

\section{Space Complexity of Skip Orthogonal List}
\label{sec:spaceproof}

In an Optimal Transport problem on point set $V$, the Skip Orthogonal List in our algorithm has a size of $3\abs{V} - 2 = O(\abs{V})$ rows and columns. To show that the space complexity of our Skip Orthogonal List is quadratic with high probability, we only need to prove theorem \ref{thm:space}.

\begin{theorem}
    \label{thm:space}
    The space complexity of a $n\times n$ 2D Skip Orthogonal List with parameter $p$ is $O(n^2)$ with high probability.
\end{theorem}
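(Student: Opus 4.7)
I would bound the total number of nodes by
\[
N = \sum_{\ell=0}^\infty |S_\ell|^2,
\]
where $S_\ell = \{s \in S : h(s) \geq \ell\}$, and show $N = O(n^2)$ with probability at least $1 - n^{-\Omega(1)}$. Since each $h(s)$ is geometric with parameter $p$, we have $\Pr[h(s) \geq \ell] = (1-p)^\ell$, so $|S_\ell|$ is a sum of $n$ independent $\mathrm{Bernoulli}((1-p)^\ell)$ random variables with mean $\mu_\ell := n(1-p)^\ell$. The expectation bound
\[
\mathbb{E}[N] = \sum_\ell \mathbb{E}[|S_\ell|^2] \leq \sum_\ell (\mu_\ell + \mu_\ell^2) \leq \frac{n}{p} + \frac{n^2}{1 - (1-p)^2} = O(n^2)
\]
follows from the geometric series; the task is to upgrade this to a high-probability statement.

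First I would truncate to a logarithmic number of levels. A union bound over $s \in S$ gives $\Pr[\max_s h(s) \geq c \log_{1/(1-p)} n] \leq n^{1-c}$, so taking $c = 2$, with probability at least $1 - 1/n$ only $H^\ast = O(\log n)$ levels are nonempty and the sum defining $N$ has $O(\log n)$ nonzero terms.

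Next I would split these levels into a \emph{bulk} group where $\mu_\ell \geq \log n$ and a \emph{tail} group where $\mu_\ell < \log n$. On bulk levels the multiplicative Chernoff bound yields $\Pr[|S_\ell| \geq 2\mu_\ell] \leq \exp(-\mu_\ell/3) \leq 1/n$, so with high probability the bulk contribution is at most $\sum_\ell 4\mu_\ell^2 = 4n^2 \sum_\ell (1-p)^{2\ell} = O(n^2)$. On tail levels, a direct binomial estimate $\Pr[|S_\ell| \geq k] \leq \binom{n}{k}(1-p)^{k\ell}$ combined with $(1-p)^\ell \leq (\log n)/n$ gives $\Pr[|S_\ell| \geq K \log n] \leq n^{-\Omega(K)}$ for suitable $K$; each tail level then contributes $O(\log^2 n)$ to $N$ with high probability, and since at most $O(\log n)$ tail levels exist, their total contribution is $O(\log^3 n) = o(n^2)$. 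A final union bound over the $O(\log n)$ levels aggregates all the individual Chernoff/binomial estimates and the height bound into a single failure probability $n^{-\Omega(1)}$.

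\textbf{Main obstacle.} The delicate point is the tail regime, where $\mu_\ell$ is too small for the multiplicative Chernoff bound to give a usable $(1+\delta)\mu_\ell$ upper bound. The way out is to abandon tight concentration there: we only need the additive $O(\log n)$ upper bound on $|S_\ell|$, which is easy from a direct binomial tail estimate, and this coarse bound is good enough because only $O(\log n)$ levels fall into this regime, so their total squared contribution $O(\log^3 n)$ sits well inside the $O(n^2)$ budget already consumed by the bulk levels.
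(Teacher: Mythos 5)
Your proof is correct, and it shares the paper's overall skeleton---write $N=\sum_\ell m_\ell^2$, split the levels at a $\Theta(\log n)$ size threshold, let a geometric series absorb the large levels and a polylogarithmic bound absorb the small ones---but the key lemmas are genuinely different. For the large levels, the paper never uses the marginal distribution of $m_\ell$: it proves a \emph{conditional} one-step contraction $\Pr[m_{i+1}>\tfrac{1+p}{2}m_i \mid m_i \text{ large}]\le n^{-2}$ via Hoeffding and then chains these events to force geometric decay starting from $m_1=n$. You instead observe that $|S_\ell|$ is unconditionally $\mathrm{Binomial}(n,(1-p)^\ell)$ (up to the paper's opposite parameter convention, survival probability $p$ rather than $1-p$, which is immaterial) and apply Chernoff level-by-level against the explicit mean $n(1-p)^\ell$; this avoids both the conditioning and the induction and is, to my eye, the cleaner route. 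For the small levels the arguments also diverge: the paper bounds, for each size $j$, the number of consecutive levels of exactly that size using $\Pr[m_i=m_{i+1}]=p^{m_i}$, whereas you first truncate the total height to $O(\log n)$ by a union bound on $\max_s h(s)$ and then cap each remaining level at $O(\log n)$ by a direct binomial tail estimate; both yield the same $O(\log^3 n)$ contribution. Your closing remark correctly identifies why tight concentration is unnecessary in the tail regime.

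One small quantitative slip: with the bulk threshold set at $\mu_\ell\ge\log n$, the multiplicative Chernoff bound gives $\exp(-\mu_\ell/3)\le n^{-1/3}$, not $1/n$. This is harmless---either raise the threshold to $3\ln n$ (shifting a constant number of levels into the tail group, where your coarse bound still applies) or accept $n^{-1/3}$ per level, since the union bound over $O(\log n)$ levels still yields the advertised $1-n^{-\Omega(1)}$ success probability.
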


\begin{proof}
    Denote the number of nodes in the entire Skip Orthogonal List as $N$.
    
    Assume there are $\ell$ levels in the Skip Orthogonal List. Let $m_i$ be the number of nodes on the $i$-th level where $1\leq i\leq \ell$, and $0$ when $i > \ell$.
    It is easy to see that
    \begin{itemize}
        \item $m_1 = n$;
        \item $m_i\geq m_{i + 1}$ for all $i\geq 1$, i.e., $\qty{m_i}_{i = 1}^\infty$ is monotonically decreasing;
        \item There are $N = \sum_{i = 1}^\infty m_i^2$ nodes in the entire Skip Orthogonal List.
    \end{itemize}

    If $m_i\geq \frac 4{(1 - p)^2}\ln n$, we call the $i$-th level a \textit{big level}; otherwise we call the $i$-th level a \textit{small level}.
    By the monotonicity of $\qty{m_i}_{i = 1}^\infty$, we can see that, there exists a threshold $\ell_t$, such that $i\leq \ell_t$ if and only if $i$-th level is a big level, i.e., $i\leq \ell_t$ if and only if $m_i\geq 4{(1 - p)^2}\ln n$.

    Denote $N_b \triangleq \sum_{i = 1}^{\ell_t}m_i^2$, i.e., the number of nodes in big levels; denote $N_s\triangleq \sum_{i = \ell_t + 1}^\infty m_i^2$, i.e., the number of nodes in small levels.
    Therefore $N = N_b + N_s$. We now give their bound respectively with Lemma \ref{thm:bigbound} and Lemma \ref{thm:smallbound}.

    \begin{lemma}
        \label{thm:bigbound}
        $$
        \Pr[N_b> \frac{4}{3 - 2p - p^2}n^2]\leq \frac{\log_{2/(1 + p)}n}{n^2} = O\qty(\frac 1n)
        $$
    \end{lemma}

    \begin{proof}
        First we prove that
        
        $$
        \Pr[m_{i + 1} > \frac{1 + p}2 m_i\ \bigg|\ m_i \geq\frac{4}{(1 - p)^2}\ln n] \leq \frac 1{n^2}
        $$

        Denote $\qty{X_j}_{j = 1}^{m_i}$ be the indicator variable of $m_i$ events, such that $X_j = 1$ if and only if the $j$-th row and column of the $i$-th level exists in $i + 1$-th level.
        By the construction of Skip Orthogonal List, we can see that $\qty{X_j}_{j = 1}^{m_i}$ are independent Bernoulli variables with parameter $p$.
        Also, by definition, $m_{i + 1} = \sum_{j = 1}^{m_i}X_j$
        Therefore
        \begin{itemize}
            \item By linearity of expectation
            $$
            \mathbb E\qty[m_{i + 1}] = \sum_{j = 1}^{m_i}\mathbb E\qty[X_j] = pm_i
            $$
            \item By Hoeffding's inequality
            \begin{align*}
            \Pr[m_{i + 1} > \frac{1 + p}2m_i]&\leq \exp(-\frac{2\qty(\frac{1 + p}2m_i - \mathbb E\qty[m_{i + 1}])^2}{m_i}) = \exp(-\frac{(1 - p)^2}2m_i)
            \end{align*}
        \end{itemize}
        Therefore,
        \begin{align*}
        \Pr[m_{i + 1} > \frac{1 + p}2m_i\ \bigg|\ m_i\geq \frac{4}{(1 - p)^2}\ln n]\leq \exp(-\frac{(1 - p)^2}2\cdot \frac 4{(1 - p)^2}\ln n) = \frac 1{n^2}
        \end{align*}


        When $\frac {1 + p}2m_i\geq m_{i + 1}$ for all $1\leq i\leq \ell_t - 1$,

        $$
        N_b\sum_{i = 1}^{\ell_t}m_i^2\leq \sum_{i = 0}^\infty m_1^2\qty(\frac{1 + p}2)^{2i} = \frac4{3 - 2p - p^2}n^2
        $$

        Therefore, when $N_b > \frac 4{3 - 2p - p^2}n^2$,
        there must exists some $i$ in range $1\leq i\leq \min(\log_{2/(1 + p)}n, \ell_t - 1)$ such that $m_{i + 1} > \frac{1 + p}2m_i$, since $m_1, m_2, \cdots, m_{\ell_t}$ are positive integers.
        By union bound, this happens with probability less than or equal to $\frac {\log_{2/(1 + p)}n}{n^2}$.
    \end{proof}
    \begin{lemma}
        \label{thm:smallbound}
        $$
        \Pr[N_s > \frac 8{\ln \frac 1p(1 - p)^2}\ln^3 n]\leq \frac {4\ln n}{(1 - p)^2n^2}
        $$
    \end{lemma}
    \begin{proof}
        In order to bound $N_s$, we separate $\qty{m_i}_{i = \ell_t + 1}^\infty$
        into $\left\lfloor\frac 4{(1 - p)^2}\ln n\right\rfloor$ parts:
        the $j$-th part contains the number of levels of which sizes are $j\times j$.
        Denote the number of levels in the $j$-th part as $k_j$.
        Thus $N_s = \sum_{i = \ell_t + 1}^\infty m_i^2 = \sum_{j = 1}^{\left\lfloor\frac 4{(1 - p)^2}\ln n\right\rfloor} j^2k_j$

        If $k_j > 0$, which indicates that there are some consecutive levels in the Skip Orthogonal List that indeed have size $j$.
        Since for all $i\geq 1$, $\Pr[m_i = m_{i + 1}] = p^{m_i}$ (all Bernoulli variables indicating whether the row/column remains in the upper level turns true), and the event $\qty{[m_i = m_{i + 1}]}_{i = 1}^\infty$ are independent, we can see that $\Pr[k_j > \frac2{j\ln\frac 1p}\ln n]\leq \frac 1{n^2}$.

        Therefore, by union bound, with probability at least $1 - \frac 4{(1 - p)^2}\ln n$ we have $k_j\leq \frac 2{j\ln \frac 1p}\ln n$ for all $1\leq j\leq \frac 4{(1 - p)^2}\ln n$ and thus
        \begin{align*}
        N_s = \sum_{j = 1}^{\left\lfloor\frac 4{(1 - p)^2}\ln n\right\rfloor} j^2k_j \leq \frac2{\ln\frac 1p}\ln n\sum_{j = 1}^{\left\lfloor\frac 4{(1 - p)^2}\ln n\right\rfloor}j \leq \frac 8{\ln \frac 1p(1 - p)^2}\ln^3 n
        \end{align*}
        which is equivalent to the description of Lemma \ref{thm:smallbound}.
    \end{proof}

    By Lemma \ref{thm:bigbound}, \ref{thm:smallbound} and union bound,
    as $p\in (0, 1)$ is a constant, we can see that there exists some constant $c_1, c_2$ such that $N < c_1n^2$ with probability at least $1 - \frac{c_2}n$, which finishes the proof of Theorem \ref{thm:space}.
\end{proof}

\section{Operations on Skip Orthogonal List}
\label{sec:timeproof}

This section provides detailed algorithm for operations on our Skip Orthogonal List.
We use the notation in Figure \ref{fig:2dcutflatten} and denote $a^{(i)}$ as the $i^{\rm th}$ ancestor of node $a$.

\subsection{Cut}
\label{sec:timeproofcut}
\begin{algorithm}[htbp]
\caption{Update (Cut)}
\label{alg:cut}
\begin{algorithmic}[1]
\Require{Skip 2D Orthogonal List $C^\pi$, Edge $\qty{u, v}$ to cut}
\Ensure{Shattered $C^\pi$ after cutting the edge $\qty{u, v}$}
\Procedure{Cut}{$C^\pi, u, v$}
\State Find the 4 intersections of row $(u, v)$, row $(v, u)$ and column $(u, v)$, column $(v, u)$. Denote them as $p_{11}, p_{12}, p_{21}, p_{22}$ as Figure \ref{fig:2dcutflatten} demonstrates.
\State Run 4 copies of procedure \textsc{PushDownLines} to clear the \textit{tag} attribute of relevant nodes.
\State Run 4 copies of procedure \textsc{CutLine} to break the entire 2D Orthogonal List $C^\pi$ into 4 pieces.
\State Run 4 copies of Procedure \textsc{PullUpLines} to update the \textit{min} attributes of affected nodes.
\EndProcedure

\Procedure{CutLine}{$C^\pi, p, q$}
\For{each $x$ in the Euler Tour of the current sub-tree}
    \State Move $p, q$ in the current direction.
    \State Let $p_b, q_b$ be the backward node in the orthogonal direction and $p_f, q_f$ be the forward node.
    \State Iterative calculate $p_b^{(i)}, p_f^{(i)}, q_b^{(i)}, q_f^{(i)}$, i.e., the $i$-th ancestor.
    \For{$i = h_x, h_x - 1, \cdots, 0$}
        \State Link $p_b^{(i)}$ and $q_f^{(i)}$. Update \textit{parent} attributes.
        \State Link $q_b^{(i)}$ and $p_f^{(i)}$. Update \textit{parent} attributes.
        \State Call procedure \textsc{PushDown} to update the \textit{min} and \textit{tag} attribute of their children.
    \EndFor
\EndFor
\EndProcedure

\Procedure{PullUpLines}{$C^\pi, p$}
\For{each Direction \textit{dir}}
    \For{each edge $x$ in the Euler Tour of the current direction}
        \For{$i = 1, 2, \cdots, h_x$}
            \State Let $p^{(i)}$ be the $i^{\rm th}$ ancestor of $p$.
            \State Call procedure \textsc{PullUp} to update the \textit{min} attribute of $p^{(i)}$ by brutally traversing through its children set.
        \EndFor
        \State Let $p$ be the next node in the Euler Tour.
    \EndFor
\EndFor
\EndProcedure
\end{algorithmic}
\end{algorithm}

Algorithm \ref{alg:cut} describes the cut update to update the structure while updating some aggregate information
(i.e. \textit{tag} and \textit{min} attributes).
Here are some clarifications:

\begin{itemize}
    \item The time complexity of finding these rows and columns depends on the lookup table it depends on. It can never exceed $O(\abs{T}) = O(\abs{V})$.
    \item As Figure \ref{fig:2dcutflatten} demonstrates,
    on the bottom level,
    procedure \textsc{CutLine} removes 2 rows related to the edges to cut and links nodes alongside (changes the \textit{forward} and \textit{backward} pointers of nodes along side),
    with a time complexity of $\Theta(\abs{T}) = \Theta(\abs{V})$,
    while in the upper levels,
    procedure \textsc{CutLine} removes all transparent nodes in Figure \ref{fig:3dcut} and changes the \textit{forward} pointers and related \textit{backward} pointers of the nodes alongside, i.e. red nodes in Figure \ref{fig:3dcut}.

    Note that when linking adjacent nodes, we need to update the \textit{parent} attributes of nodes $q_f^{(i)}$ and $p_f^{(i)}$ accordingly, and nodes below it accordingly.
    However, every node is affected at most once during the entire \textsc{CutLine} procedure, and the number of nodes affected are sure to be affected by \textsc{PullUpLines} process.
    \item Procedure \textsc{PushDownLines} is capable of pushing down the \textit{tag} attribute of all nodes alongside the rows and columns of the four nodes,
    i.e. the red nodes in \ref{fig:3dcut}.
    \item Similar to \textsc{PushDown}, procedure \textsc{PullUpLines} updates the \textit{min} attribute of the red nodes in \ref{fig:3dcut}.
    \item In procedure \textsc{PushDown},
    $q$ is in the children set of $p$ if and only if $p$ is the parent of $q$.
    Later we will use this concept to analyze the time complexity.
    \item Procedure \textsc{PushDownLines} is similar to procedure \textsc{PullUpLines}.
    Instead of calling procedure \textsc{PushDown} which updates parents' \textit{min} attribute from lower level to higher level,
    \textsc{PullUpLines} calls procedure \textsc{PullUp} which clears the \textit{tag} attribute from higher level to lower level.
    It pushes down the row and column of the input node and the row and column next to the given node.
\end{itemize}
From analyses above,
we can now show Theorem \ref{thm:cut}.
\begin{theorem}
    \label{thm:cut}
    The expected time complexity of procedure \textsc{Cut} is $O(\abs{V})$.
\end{theorem}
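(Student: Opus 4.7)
The plan is to decompose the cost of a single call to procedure \textsc{Cut} into four phases and show that each contributes expected $O(|V|)$ time. The four phases are: (i) locating the four intersection pointers $p_{11}, p_{12}, p_{21}, p_{22}$, (ii) the four calls to \textsc{PushDownLines}, (iii) the four calls to \textsc{CutLine}, and (iv) the four calls to \textsc{PullUpLines}. Phase (i) can always be performed by walking along the bottom-level Euler Tour in $O(|E_D|) = O(|V|)$ time deterministically, so it suffices to bound (ii)--(iv) in expectation.

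For phase (iii), the outer loop of \textsc{CutLine} iterates once over the bottom-level Euler Tour of each resulting sub-tree, so across all four calls the outer loop executes $O(|E_D|) = O(|V|)$ times. For each element $x$ in the Euler Tour, the inner loop walks up through levels $0, 1, \ldots, h_x$, performing $O(1)$ pointer-rewiring and a \textsc{PushDown} call at each level. Since the heights $\{h_x\}$ are i.i.d.\ geometric with parameter $p$, linearity of expectation gives $\sum_{x \in E_D} \mathbb{E}[h_x + 1] = O(|V|/(1-p)) = O(|V|)$. Each \textsc{PushDown} call touches only the children of a single node, and because a node at level $\ell+1$ is obtained by independently sub-sampling the row-indices and column-indices of level $\ell$ with probability $p$, its expected number of level-$\ell$ children is $(1/p)\cdot(1/p) = O(1)$. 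Therefore the per-iteration cost is $O(1)$ in expectation, and phase (iii) is $O(|V|)$ expected overall.

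For phases (ii) and (iv), I would count the ``red nodes'': those lying on one of the two rows indexed by the directed edges $(u,v), (v,u)$ or the two analogous columns, at any level. By Definition~\ref{dfn:2dskiplist} a single row (resp.\ column) at level $\ell$ contains $|S_\ell|$ nodes, and since $S_\ell = \{s \in E_D : h_s \geq \ell\}$, we have $\mathbb{E}[|S_\ell|] = |E_D| p^\ell$. Summing four rows/columns over all levels yields an expected red-node count of $\sum_{\ell \geq 0} 4|E_D| p^\ell = O(|V|/(1-p)) = O(|V|)$. Each red node's contribution in \textsc{PullUp} or \textsc{PushDown} is to scan its children, whose expected cardinality is again $O(1)$ by the independence argument above, so phases (ii) and (iv) cost $O(|V|)$ in expectation as well. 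Combining all four phases proves the theorem.

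The main obstacle is the children-accounting step: a single high-level red node can in principle dominate many bottom-level nodes, so a naive bound of (number of red nodes) $\times$ (maximum subtree size) would be far too loose. The crucial observation is that the expected number of \emph{direct} children is $O(1)$ and that \textsc{PushDown}/\textsc{PullUp} only needs to touch direct children (the deeper descendants are reached recursively only when \emph{they} are also red nodes, and so are already counted once in the $O(|V|)$ red-node total). Making this telescoping precise — by charging the work for updating a child to the parent pointer traversed exactly once — is what allows the product of two $O(|V|/(1-p))$-style sums to collapse to $O(|V|)$ rather than $O(|V|^2)$.
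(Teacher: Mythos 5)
Your proposal is correct in its overall accounting and reaches the right bound, but it travels a genuinely different route from the paper. The paper does not argue ``$O(|V|)$ red nodes times $O(1)$ expected children each'' directly; instead it sets up recurrences over the recursive structure of the skip list: it defines $g(n)$ (resp.\ $G(n,m)$) as the expected number of affected children in a diagonal (resp.\ non-diagonal) piece, conditions on the run of height-$0$ rows adjacent to the cut (contributing the $n^2-(n-k)^2\leq 2kn$ term with probability $p^{k-1}(1-p)$) and on how many rows survive to the next level (the $\binom{n}{k}p^k(1-p)^{n-k}$ term), and then proves $\mathbb E[g(n)]\leq \frac{4-2p}{(1-p)^2}n$ and $\mathbb E[G(n,m)]=O(n+m)$ by induction. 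Your level-by-level linearity-of-expectation argument is more direct and avoids both the induction and the paper's case split between diagonal and non-diagonal pieces; what the paper's recurrence buys is that the correlation issue you flag is handled automatically by explicit conditioning. That correlation is the one place your write-up should be tightened: whether a node is red and how many children it has are both functions of the same height variables, so you cannot simply multiply $\mathbb E[\#\text{red nodes}]$ by $\mathbb E[\#\text{children}]$. The fix is the one you gesture at: condition on $S_\ell$, note that survival to $S_{\ell+1}$ is an independent $p$-subsample by memorylessness of the geometric distribution, observe that the total child count of a red row is (its row-gap) $\times\,|S_\ell|$ with the row-gap having conditional expectation at most $1/p$, and sum $|S_\ell|/p$ over levels to get $O(|V|)$. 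You should also note that \textsc{PullUpLines} runs on the \emph{post-cut} adjacency, where the relevant gaps are unions of two pre-cut gaps plus the removed row --- still $O(1/p)$ in expectation, so the same bound applies; this is precisely why the paper analyzes $g$ and $G$ on the four pieces after cutting.
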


\begin{proof}
    From Algorithm \ref{alg:cut},
    each \textsc{PushDownLines},
    \textsc{CutLine},
    procedure \textsc{PullUpLines}
    only apply constant number of modifications and queries on children of nodes alongside the 2 cut rows and columns
    , i.e. the red and transparent nodes and their children in Figure \ref{fig:3dcut}.\\
    We first calculate the expected number of nodes removed in procedure \textsc{Cut},
    i.e. the transparent nodes in Figure \ref{fig:3dcut}.
    The removed nodes in each level are in the row $\rho_{uv}$, $\rho_{vu}$ and column $\tau_{uv}, \tau_{vu}$ where $\rho_{uv} \triangleq \qty{\bigl((u, v), e\bigr) : e\in E(D_T)}$ and $\tau_{uv}\triangleq \qty{\bigl(e, (u, v)\bigr) : e\in E(D_T)}$.
    By the definition of node height,
    node $\bigl((u, v), e\bigr)$, $\bigl(e, (u, v)\bigr)$, $\bigl((v, u), e\bigr)$ and $\bigl(e, (v, u)\bigr)$ are in less than or equal to $h_e + 1$ orthogonal lists.
    Therefore, the total number of removed points is less than or equal to $4\sum_{e\in E(D_T)}(h_e + 1)$.
    Thus the expected number of removed nodes is less than or equal to $4\sum_{e\in E(D_T)}(\mathbb E(h_e) + 1) = \frac 4pn = O(n)$.\\    
    Next we calculate the children of nodes whose the dominate set is affected after procedure \textsc{Cut} update takes place,
    i.e. red nodes and their children after the \textsc{Cut} update in Figure \ref{fig:3dcut}.
    Similar to counting transparent nodes,
    the expected number of red nodes is also less than or equal to $\frac 4pn = O(n)$.
    Now we count the number of their children.\\
    Recall that in Figure \ref{fig:2dcutflatten} we denote the 2 diagonal pieces of the big skip list as $C^\pi_{11}$ and $C^\pi_{22}$ while we denote the 2 non-diagonal pieces of the big skip list as $C^\pi_{21}$ and $C^\pi_{12}$.
    We count the number of red nodes and their children in the 4 pieces after procedure \textsc{Cut} respectively.
    Denote $g(n)$ as the number of nodes in $C^\pi_{11}$ whose parent's children set is affected,
    i.e. whose parent is a red node as Figure \ref{fig:3dcut} demonstrates,
    when $n = \abs{E(D_{T_1})}$.
    Then $g(\abs{E(D_{T_2})})$ is that of $C^\pi_{22}$, and we know that:
    \begin{itemize}
        \item With probability $(1 - p)^n$,
        the height of all $n$ rows and columns are $0$,
        which indicates that every node in the current skip orthogonal list does not have a parent.
        Further more, as shown in Figure \ref{fig:toplevel},
        where
        there is only one row and one column beside the row just cut
        (the red row and column),
        therefore $g(n) = 2n - 1$.
        \item For $k = 1, \cdots, n$,
        $h_n = h_{n - 1} = \cdots = h_{n - k + 2} = 0$ and $h_{n - k + 1} > 0$ occurs with probability $p^{k - 1}(1 - p)$,
        as shown in Figure \ref{fig:nottoplevel},
        where the yellow nodes are the position of the cut-affected node in the upper layer and the green nodes denote upper layer nodes whose children set is not affected.
        Therefore,
        $n^2 - (n - k)^2\leq 2kn$ nodes of the current level is a child of some cut-affected node.
    \end{itemize}
    \begin{figure}[htbp]
      \centering
      \begin{subfigure}[b]{0.45\linewidth}
        \centering
        \adjustbox{width=\textwidth,height=6cm,keepaspectratio}{
          \includegraphics{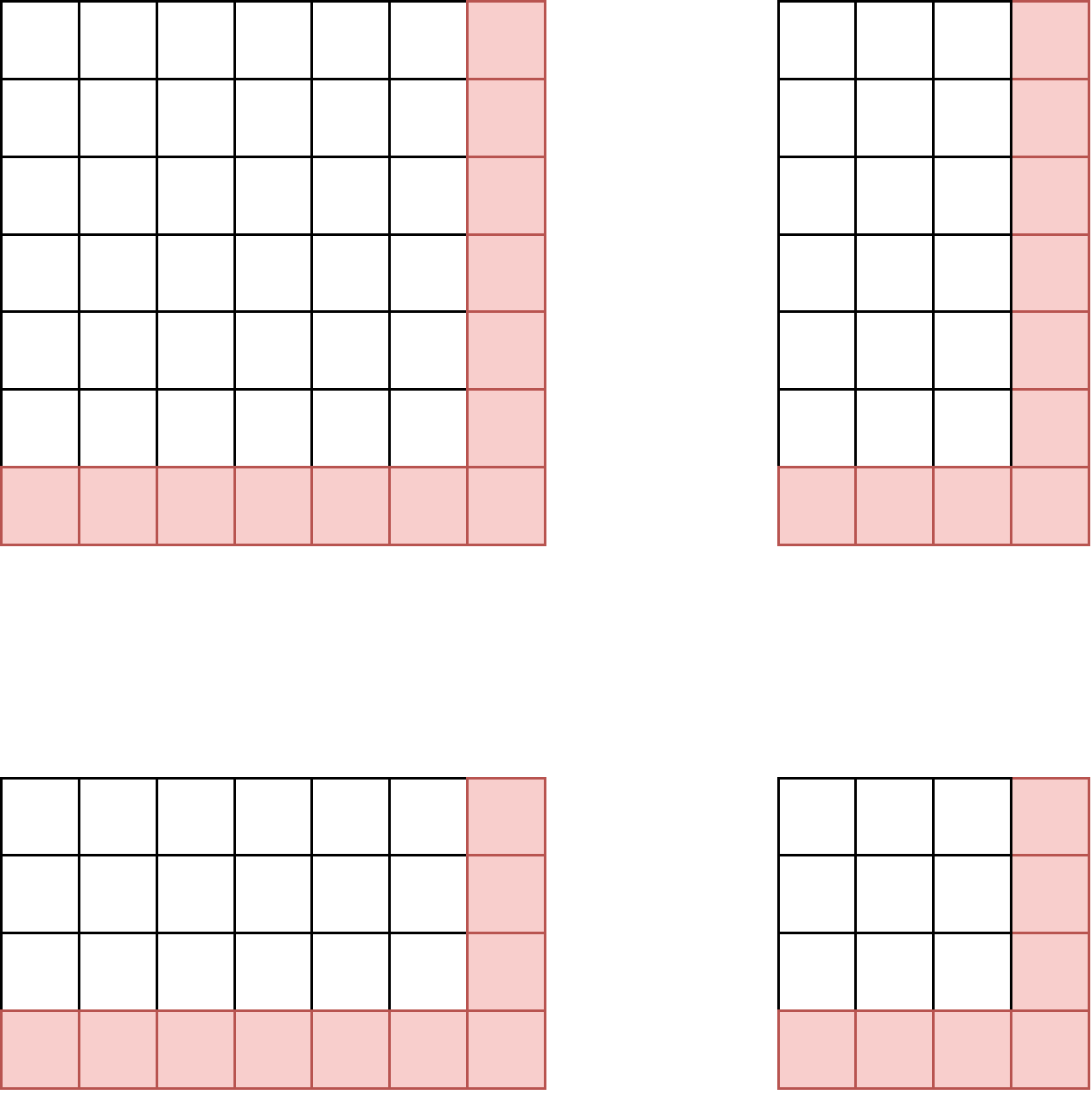}
        }
        \caption{Current Layer is at Top}
        \label{fig:toplevel}
      \end{subfigure}\hfill
      \begin{subfigure}[b]{0.45\linewidth}
        \centering
        \adjustbox{width=\textwidth,height=6cm,keepaspectratio}{
          \includegraphics{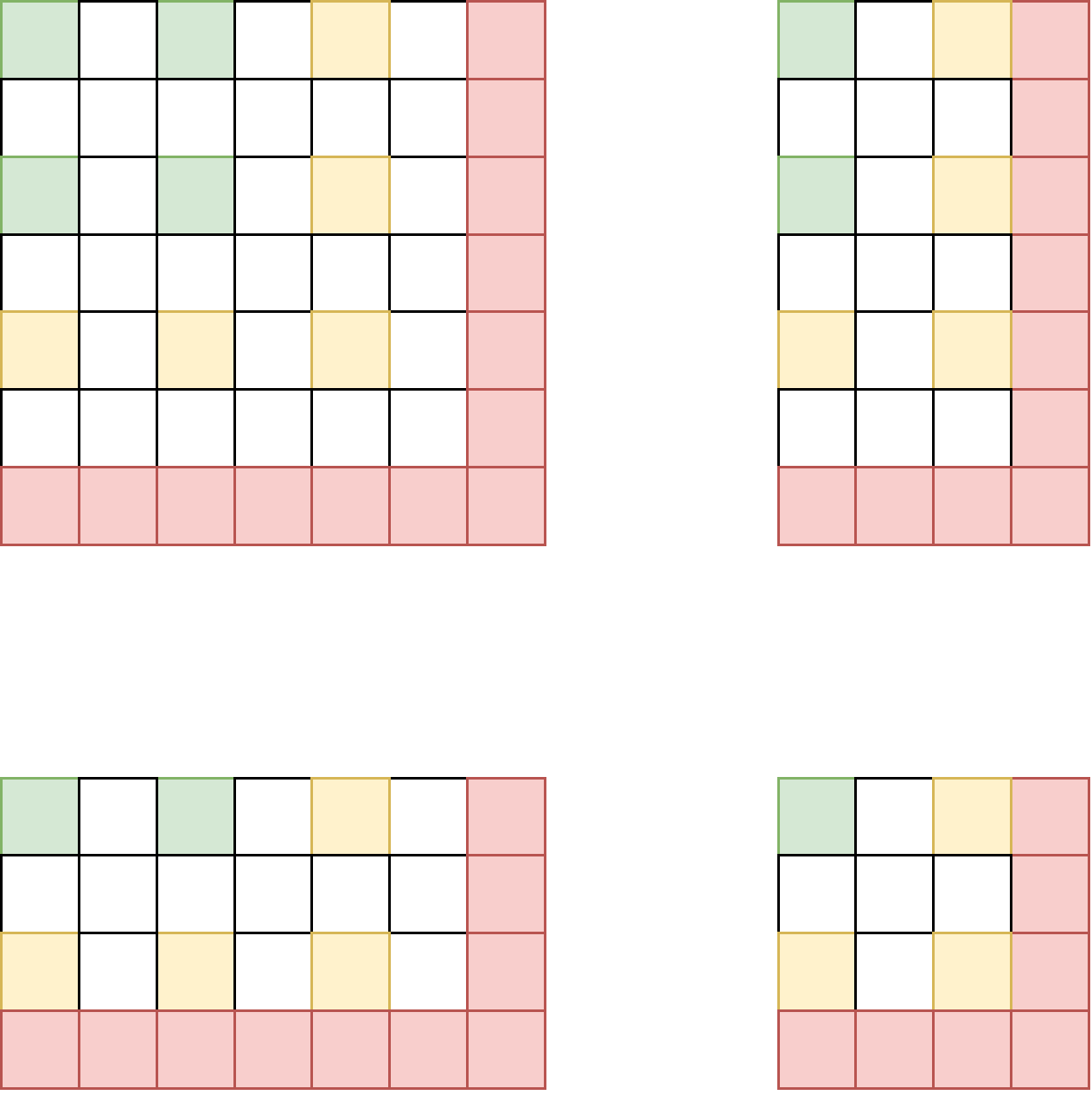}
        }
        \caption{Current Layer is Not at Top}
        \label{fig:nottoplevel}
      \end{subfigure}
      \caption{Illustration of Modified Nodes at Some Layer}
    \end{figure} 
    For $k = 0, 1, \cdots, n$,
    with probability $p^k(1 - p)^{n - k}\binom nk$,
    the upper level contains $k$ nodes,
    i.e. contain $g(k)$ nodes which is visited in procedure \textsc{Cut}
    (denote $g(0) = 0$).
    By the linearity of expectation, the following holds:

    \begin{align*}
    \mathbb E[g(n)] &= (1 - p)^n(2n - 1) + \sum_{k = 1}^np^{k - 1}(1 - p)(n^2 - (n - k)^2) + \sum_{k = 0}^n\binom nkp^k(1 - p)^{n - k} \mathbb E[g(k)]\\
    &\leq 2n + \sum_{k = 1}^\infty p^{k - 1}(1 - p)2nk + \sum_{k = 0}^n\binom nkp^k(1 - p)^{n - k} \mathbb E[g(k)]\\
    &= \frac{4 - 2p}{1 - p}n + \sum_{k = 0}^n\binom nkp^k(1 - p)^{n - k}\mathbb E[g(k)]
    \end{align*}

    We can prove by induction that $\mathbb E[g(n)]\leq \frac {4 - 2p}{(1 - p)^2}n$.
    Therefore, the expected number of visited nodes in procedure \textsc{Cut}
    (i.e. red nodes and their children in Figure \ref{fig:3dcut})
    in the 2 diagonal pieces $C^\pi_{11}$ and $c^\pi_{12}$ does not exceed $\mathbb E[g(\abs{E(D_{T_1})})] + \mathbb E[g(\abs{E(D_{T_2})})]\leq \frac{4 - 2p}{(1 - p)^2}\abs{E(D_T)} = O(\abs{E(D_T)})$.\\
    Denote $G(n, m)$ as the number of nodes visited in the cut process in each of the 2 non-diagonal pieces $C^\pi_{12}$ and $C^\pi_{21}$
    (their numbers are equal by symmetry)
    where $\abs{E(D_{T_1})} = n$ and $\abs{E(D_{T_2})} = m$.
    Similarly, we can show that
\begin{equation*}
\begin{aligned}
\mathbb E[G(n, m)] =&\ ((1 - p)^n + (1 - p)^m - (1 - p)^{n + m})(n + m - 1) \\
& + \sum_{\substack{1\leq i\leq n\\1\leq j\leq m}}p^{i + j - 2}(1 - p)^2(nm - (n - i)(m - j)) \\
& + \sum_{\substack{0\leq i\leq n\\0\leq j\leq m}} \binom ni\binom mjp^{i + j}(1 - p)^{n + m - i - j} \times \mathbb E[G(i, j)]
\end{aligned}
\end{equation*}

    Similarly, we are able to prove by induction that $\mathbb E[G(n, m)] = O(n + m)$,
    which indicates that the number of cut-affected nodes in $C^\pi_{12}$ and $C^\pi_{21}$ is $O(\abs{E(D_{T_1})} + \abs{E(D_{T_2})}) = O(\abs{E(D_T)})$.\\
    Since $\abs{E(D_T)} = 3\abs{V} - 2 = O(\abs{V})$,
    the entire \textsc{Cut} process will visit $O(\abs{V})$ nodes.
    Since each copy of procedure \textsc{Cut} only does constant number of operations on these nodes,
    the expected time complexity of procedure \textsc{Cut} is $O(\abs{V})$.
\end{proof}

\subsection{Link}
\label{sec:timeprooflink}
Procedure \textsc{Link} is very similar to procedure \textsc{Cut}.
It can be described as Algorithm \ref{alg:link}.
\begin{algorithm}[htbp]
\caption{Update (Link)}
\label{alg:link}
\begin{algorithmic}[1]
\Require{Shattered Skip Orthogonal List $C^\pi$, Edge $\qty{u, v}$ to link}
\Ensure{$C^\pi$ after linking the edge $\qty{u, v}$ is}
\State Push down \textit{tag} attribute of nodes along row $(u, u)$, row $(v, v)$, column $(u, u)$ and column $(v, v)$ through \textsc{PushDownLines}.
\State Let $h_{uv}, h_{vu}$ be 2 random variables subject to geometric distribution.
\State Link row $(u, u)$ and row $(v, v)$; column $(u, u)$ and column $(v, v)$ with 2 rows and columns of height $h_{uv}$ and $h_{vu}$ through procedure \textsc{LinkLines}.
\State Use \textsc{PushUpLines} procedure to push up \textit{min} attributes of ancestors of nodes in row $(u, u)$, row $(u, v)$, row $(v, v)$, row $(v, u)$ and these columns.
\end{algorithmic}
\end{algorithm}

Here, procedure \textsc{LinkLine} function is very similar to procedure \textsc{CutLine}.
It creates a series of new node according to their heights $h_{uv}$ and $h_{vu}$ to link 2 Skip Orthogonal List pieces together in one direction.
As procedure \textsc{link} behaves almost the same as procedure \textsc{Cut},
making constant number of modifications to nodes along the newly linked line,
the time complexity for each copy of procedure \textsc{Link} is also $O(\abs{V})$.

\subsection{Insert}
\label{sec:timeproofinsert}
For procedure \textsc{Insert},
we only need to add one basic variable to the basis to make it feasible again.
Without loss of generality,
Assume we need to add node $v$ to the demand node set.
Procedure \textsc{Insert} could be described in Algorithm \ref{alg:insert}.

\begin{algorithm}[htbp]
\caption{Update (Insert)}
\label{alg:insert}
\begin{algorithmic}[1]
\Require{2D Skip Orthogonal List $C^\pi$, node $v$ to insert}
\Ensure{$C^\pi$ after inserting node $v$}
\State $u\gets \arg\max_{u\in V}\pi_u - C_{uv}$
\State $\pi_v\gets \pi_u - C_{uv}$
\State Insert $v$ into set $V$ and insert $x_{uv}$ into set of basic variables $X_B$.
\end{algorithmic}
\end{algorithm}

Since $\pi_v = \pi_u - C^{uv}$ and $x_{uv}$ is the newly added basic variable,
dual variable $\pi$ satisfies Constraint \ref{eqn:complementaryslackness}.
Further more,
for all $\tilde u\in V$,
we have $\pi_v = \pi_u - C_{uv} = \max_{u\in V}\pi_u - C_{uv}\geq \pi_{\tilde u} - C_{\tilde u v}$.
Therefore $C^\pi_{\tilde uv} = C_{uv} + \pi_v - \pi_u\geq 0$,
which indicates that simplex multipliers related to the newly added variable is non-negative.
If the old basis is primal optimal,
the new basis will sure be primal optimal too.\\
Line 3 in Algorithm \ref{alg:insert} requires adding an entire row and column to the Skip Orthogonal List based on the existing basic variables $X_B$.
together with a copy of procedure \textsc{Link} to update $X_B$.
The expected time complexity for these operations are all $O(\abs{V})$.
Therefore the expected time complexity of each copy of Algorithm \ref{alg:insert} is $O(\abs{V})$.

\subsection{Range Update (Add) and Query}

The \textit{tag} attribute of each node stores the value that should be added to each node dominated by it but haven't been propagated.
The \textit{min} attribute stores the minimum value and the index of the minimum value of all nodes dominated by it.
Similar to the lazy propagation technique:

\begin{itemize}
    \item For procedure \textsc{RangeUpdate} to do the range update that adds all elements by \textit{val} amount in the current Skip Orthogonal List piece, we add the \textit{tag} attribute and \textit{min} attribute of all nodes by \textit{val} on the top layer.
    \item For query \textit{min} to do range query on the minimum value and indices of all elements in the Skip Orthogonal List, we visit every top layer node and find the node with the minimum \textit{min} attribute.
\end{itemize}

Therefore, to bound the running time of procedure \textsc{RangeUpadte} and query \textsc{GlobalMinimum},
we only need to give bound to the nodes on the top level.
To achieve this, we show Lemma \ref{thm:piecetop} and \ref{thm:top}.
But in order to prove them, we need to prove Lemma \ref{thm:maxelems} first.

\begin{lemma}
    \label{thm:maxelems}
    Let $n$ be a positive integer and $p$ be a real number in range $(0, 1)$.
    For any $n$ independent and identically distributed random variables $\qty{X_i}_{i = 1}^n$ following geometric distribution with parameter $p$,
    the expected number of maximum elements is less than $\frac 1p$, and the expected square of the number of the maximum elements is less than $\frac 2{p^2} - \frac 1p$,
    i.e.,
    $$
    \mathbb E\qty[\abs{\qty{i : X_i = \max_{j = 1}^nX_j}}] < \frac 1p
    $$
    $$
    \mathbb E\qty[\abs{\qty{i : X_i = \max_{J = 1}^nX_j}}^2] < \frac 2{p^2} - \frac 1p
    $$
\end{lemma}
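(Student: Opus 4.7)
The plan is to compute both moments of $N$ explicitly via linearity of expectation and exchangeability, and then bound each using either a Riemann-sum argument or a stochastic-dominance argument. Setting $M = \max_j X_j$, one has
\[
\mathbb{E}[N] = n\Pr[X_1 = M] = n(1-p)\sum_{k\geq 0} p^k(1-p^{k+1})^{n-1},
\]
\[
\mathbb{E}[N^2] = \mathbb{E}[N] + n(n-1)(1-p)^2\sum_{k\geq 0} p^{2k}(1-p^{k+1})^{n-2}.
\]

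For the first inequality, after reindexing $j = k+1$ and setting $y_j = p^j$, the series becomes $\mathbb{E}[N] = (n/p)\sum_{j\geq 1}(y_j - y_{j+1})(1-y_j)^{n-1}$, which is a right-endpoint Riemann sum over the partition $\{y_j\}_{j\geq 1}$ of $(0,p]$. Because $(1-y)^{n-1}$ is monotonically decreasing in $y$, this right-endpoint Riemann sum is bounded above by the integral $\int_0^p(1-y)^{n-1}\,dy = (1-(1-p)^n)/n$, and hence $\mathbb{E}[N] \leq (1-(1-p)^n)/p < 1/p$.

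For the second inequality, I observe that $2/p^2 - 1/p$ is exactly $\mathbb{E}[Y^2]$ for $Y \sim \mathrm{Geom}(p)$ on $\{1,2,\ldots\}$ with $\Pr[Y = k] = p(1-p)^{k-1}$. This motivates establishing the stochastic dominance $\Pr[N \geq k] \leq (1-p)^{k-1}$ for every $k \geq 1$; granted it, summation by parts yields
\[
\mathbb{E}[N^2] = \sum_{k\geq 1}(2k - 1)\Pr[N \geq k] < \sum_{k\geq 1}(2k - 1)(1-p)^{k-1} = \frac{2}{p^2} - \frac{1}{p},
\]
the inequality being strict because $\Pr[N \geq 2] < 1 - p$ for every $n \geq 1$, $p \in (0,1)$. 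To prove the dominance I would proceed by induction on $n$, using the update rule $N_{n+1} = N_n + \mathbf{1}[X_{n+1} = M_n]$ when $X_{n+1} \leq M_n$ and $N_{n+1} = 1$ when $X_{n+1} > M_n$, in combination with the memoryless identity $\Pr[X_{n+1}\geq t \mid M_n] = p^t$, in order to relate $\Pr[N_{n+1} \geq k]$ to $\Pr[N_n \geq k]$, $\Pr[N_n = k-1]$, and $\mathbb{E}[p^{M_n}\mathbf{1}[N_n = k-1]]$.

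The main obstacle is handling the coupling between $M_n$ and $N_n$ in the inductive step, since the mixing term $\mathbb{E}[p^{M_n}\mathbf{1}[N_n = k-1]]$ does not factor into a product of marginals. I expect the argument to close by strengthening the inductive hypothesis to jointly control $\Pr[N_n \geq k]$ and a weighted moment of the form $\mathbb{E}[p^{M_n}\mathbf{1}[N_n \geq k]]$ for each $k$, which converts the inductive step into a telescoping system of linear inequalities in those quantities.
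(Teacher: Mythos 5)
Your first-moment bound is complete, correct, and genuinely different from the paper's argument: the exact formula $\mathbb E[N] = n(1-p)\sum_{k\geq 0}p^k(1-p^{k+1})^{n-1}$, reindexed as a lower Riemann sum of the decreasing function $(1-y)^{n-1}$ over the partition $\{p^j\}_{j\geq 1}$ of $(0,p]$, gives the sharper conclusion $\mathbb E[N]\leq \bigl(1-(1-p)^n\bigr)/p < 1/p$ in a few lines. The paper instead runs an induction on $n$, tracking $Y_i$ (the number of maxima among the first $i$ variables) through the three cases $X_{i+1}>M_i$, $X_{i+1}=M_i$, $X_{i+1}<M_i$; your route for this half is arguably cleaner and yields more.

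The second moment is where the content of the lemma lies, and there your argument is not a proof. Everything rests on the tail bound $\Pr[N\geq k]\leq (1-p)^{k-1}$, which you do not establish: you sketch an induction on $n$ and then explicitly name an obstacle --- the mixing term $\mathbb E\bigl[p^{M_n}\mathbf 1[N_n=k-1]\bigr]$ does not factor into marginals --- that you have not resolved. ``I expect the argument to close by strengthening the inductive hypothesis'' is a plan, not a proof, and the coupling between $M_n$ and $N_n$ that you flag is precisely where the difficulty of this lemma is concentrated (it is also the step the paper's own recurrence treats rather casually). The paper avoids needing a full tail bound by inducting on $n$ with the \emph{relative} invariant $\mathbb E[Y_i^2]<\frac 2p\mathbb E[Y_i]-\frac 1p$, carried jointly with $\mathbb E[Y_i]<\frac 1p$; a bound on the second moment in terms of the first is exactly the kind of strengthened hypothesis that lets the induction close without ever decoupling $M_n$ from $N_n$ at the level of tail probabilities. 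As written, your second inequality is unproven: either prove the stochastic dominance $\Pr[N\geq k]\leq(1-p)^{k-1}$ outright (your observation that $\frac 2{p^2}-\frac 1p$ is the second moment of a $\mathrm{Geom}(p)$ variable on $\{1,2,\dots\}$ is a nice heuristic, but it is only that), or switch to a joint moment induction in the paper's style.
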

\begin{proof}
Let $Y_i$ be the number of maximum elements in subsequence of the first $i$ elements, i.e.,
$$
Y_i\triangleq \abs{\qty{k : X_k = \max_{j = 1}^iX_j}}
$$
Therefore, Lemma \ref{thm:maxelems} is equivalent to $\mathbb E\qty[Y_n] < \frac 1p$ and $\mathbb E\qty[Y_n] < \frac 2{p^2} - \frac 1p$.
We now prove that $\mathbb E\qty[Y_i] < \frac 1p$ and $\mathbb E\qty[Y_i^2] < \frac 2p\mathbb E\qty[Y_i] - \frac 1p$ for all integer $i = 1, 2, \cdots, n$ by induction.

First, $Y_1 = 1$, because $\qty{X_1}$ contains only 1 element.
Therefore, $\mathbb E\qty[Y_1] < \frac 1p$ and $\mathbb E\qty[Y_1] < \frac 2p\mathbb E\qty[Y_1] - \frac 1p$.
Next, we prove that $\mathbb E\qty[Y_{i + 1}] < \frac 1p$ and $\mathbb E\qty[Y_{i + 1}^2] < \frac 2p\mathbb E\qty[Y_{i + 1}] - \frac 1p$ are implied by $\mathbb E\qty[Y_i] < \frac 1p$ and $\mathbb E\qty[Y_i^2] < \frac 2p\mathbb E\qty[Y_i] - \frac 1p$.

When calculating $Y_{i + 1}$ from $Y_i$, there are 3 cases:
\begin{enumerate}
    \item $X_{i + 1} > \max_{j = 1}^iX_j$. Suppose this happens with probability $P_1$. In this case, $Y_{i + 1} = 1$.
    \item $X_{i + 1} = \max_{j = 1}^iX_j$. Suppose this happens with probability $P_2$. In this case, $Y_{i + 1} = Y_i + 1$.
    \item $X_{i + 1} < \max_{j = 1}^iX_j$. Suppose this happens with probability $P_3$. In this case, $Y_{i + 1} = Y_i$.
\end{enumerate}

Notice that, by law of total probability,
\begin{figure}
\begin{subfigure}{0.45\linewidth}
\centering
\begin{align*}
    P_1 &= \Pr[X_{i + 1} > \max_{j = 1}^iX_j] \\&= \sum_{\ell = 1}^\infty \Pr[\max_{j = 1}^iX_j = \ell]\Pr[X_{i + 1} > \ell] \\&= p\sum_{\ell = 1}^\infty \Pr[\max_{j = 1}^iX_j = \ell]p^{\ell - 1}
\end{align*}
\end{subfigure}\hfill
\begin{subfigure}{0.45\linewidth}
\centering
\begin{align*}
    P_2&= \Pr[X_{i + 1} = \max_{j = 1}^iX_j] \\&= \sum_{\ell = 1}^\infty \Pr[\max_{j = 1}^iX_j = \ell]\Pr[X_{i + 1} = \ell] \\&= (1 - p)\sum_{\ell = 1}^\infty \Pr[\max_{j = 1}^iX_j = \ell]p^{\ell - 1}
\end{align*}
\end{subfigure}
\end{figure}

Let $t\triangleq \sum_{\ell = 1}^\infty \Pr[\max_{j = 1}^iX_j = \ell]p^{\ell - 1}$. Therefore, $t$ is bounded by the interval $[0, 1]$, and we can express $P_1 = pt$, $P_2 = (1 - p)t$, $P_1 + P_2 + P_3 = 1$.

First we prove $\mathbb E\qty[Y_{i + 1}] < \frac 1p$.
By law of total expectation, we have

\begin{align*}
\mathbb E\qty[Y_{i + 1}] &= 1\cdot P_1 + (\mathbb E\qty[Y_i] + 1)\cdot P_2 + \mathbb E\qty[Y_i]\cdot P_3 = \qty(1 - \mathbb E\qty[Y_i]p)t + \mathbb E\qty[Y_i]
\end{align*}

Because $\mathbb E\qty[Y_i] < \frac 1p$ (inductive hypothesis),
$1 - \mathbb E\qty[Y_i]p > 0$.
Since $t\leq 1, 0 < p < 1$, we get the following

\begin{align*}
    \mathbb E\qty[Y_{i + 1}] = \qty(1 - \mathbb E\qty[Y_i]p)t + \mathbb E\qty[Y_i] \leq 1 - \mathbb E\qty[Y_i]p + \mathbb E\qty[Y_i] = 1 + (1 - p)\mathbb E\qty[Y_i] < \frac 1p
\end{align*}

Hence finishes the proof of $\mathbb E\qty[Y_{i + 1}] < \frac 1p$.

Next we prove that $\mathbb E\qty[Y_{i + 1}^2] < \frac 2p\mathbb E\qty[Y_{i + 1}] - \frac 1p$.
Since $Y_{i + 1} = (1 - \mathbb E\qty[Y_i]p)t + \mathbb E\qty[Y_i] \geq \mathbb E\qty[Y_i]$,
by law of total expectation, we have
\begin{align*}
\mathbb E\qty[Y_{i + 1}^2] =&\ 1\cdot P_1 + (\mathbb E\qty[Y_i] + 1)^2\cdot P_2 + \mathbb E^2[Y_i]\cdot P_3\\
=&\ \qty(-\mathbb E\qty[Y_i^2]p - 1 + 2\mathbb E\qty[Y_i])t + 2\mathbb E\qty[Y_{i + 1}] + \mathbb E\qty[Y_i^2] - 2\mathbb E\qty[Y_i]
\end{align*}

Because $\mathbb E\qty[Y_i^2] < \frac 2p\mathbb E\qty[Y_i] - \frac 1p$,
we have $-\mathbb E\qty[Y_i^2] - 1 + 2\mathbb E\qty[Y_i] > 0$.
Since $t\leq 1, 0 < p < 1$, we get the following

\begin{align*}
    \mathbb E\qty[Y_{i + 1}^2] =&\ \qty(-\mathbb E\qty[Y_i^2]p - 1 + 2\mathbb E\qty[Y_i])t + 2\mathbb E\qty[Y_{i + 1}] + \mathbb E\qty[Y_i^2] - 2\mathbb E\qty[Y_i]\\
    \leq &\ {-\mathbb E}\qty[Y_i^2]p - 1 + 2\mathbb E\qty[Y_i] + 2\mathbb E\qty[Y_{i + 1}] + \mathbb E\qty[Y_i^2] - 2\mathbb E\qty[Y_i]\\
    =&\ (1 - p)\mathbb E\qty[Y_i^2] - 1 + 2\mathbb E\qty[Y_{i + 1}]\\
    <&\ (1 - p)\qty(\frac 2p\mathbb E\qty[Y_i] - \frac 1p) - 1 + 2\mathbb E\qty[Y_{i + 1}]\\
    \leq&\ (1 - p)\qty(\frac 2p\mathbb E\qty[Y_{i + 1}] - \frac 1p) - 1 + 2\mathbb E\qty[Y_{i + 1}]\\
    =&\ \frac 2p\mathbb E\qty[Y_{i + 1}] - \frac 1p
\end{align*}

Therefore, we have finished the proof that $\mathbb E\qty[Y_n] < \frac 1p$ and $\mathbb E\qty[Y_n^2] < \frac 2p\mathbb E\qty[Y_i] - \frac 1p$, and therefore $\mathbb E\qty[Y_n^2] < \frac 2{p^2} - \frac 1p$. Thus Lemma 3 is proved.

\end{proof}

With Lemma \ref{thm:maxelems}, we can now prove Lemma \ref{thm:piecetop}, which immediate leads to the expected time complexity upper bound of procedure \textsc{RangeUpdate}.

\begin{lemma}
    \label{thm:piecetop}
    If Skip Orthogonal List $C^\pi$ of size $n\times n$ is break into 4 pieces $C^\pi_{11}, C^\pi_{12}, C^\pi_{21}, C^\pi_{22}$ through procedure \textsc{Cut},
    then the expected size of the top level of the 2 non-diagonal pieces $C^\pi_{12}$ and $C^\pi_{21}$ is $O(n)$.
\end{lemma}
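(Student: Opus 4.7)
The plan is to reduce the top-level size of a non-diagonal piece to a computation about independent geometric random variables, and then invoke Lemma~\ref{thm:maxelems}. Fix the piece $C^\pi_{12}$ and let $n_1, n_2$ denote its numbers of rows and columns (so $n_1, n_2 \le n$); by transpositional symmetry the piece $C^\pi_{21}$ has the same dimensions, so bounding one suffices. Let $h_1, \dots, h_{n_1}$ and $h'_1, \dots, h'_{n_2}$ be the independent $\mathrm{Geom}(p)$ heights of this piece's rows and columns, and set
\[
R_k = |\{i : h_i \ge k\}|, \quad C_k = |\{j : h'_j \ge k\}|, \quad M_1 = \max_i h_i, \quad M_2 = \max_j h'_j.
\]
By Definition~\ref{dfn:2dskiplist}, layer $\ell$ of the piece is nonempty iff there is both a row and a column at height $\ge \ell$, so the top layer sits at level $L = \min(M_1, M_2)$ and contains exactly $R_L C_L$ nodes.

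First, I would pass from $L$ to $M_1, M_2$ via a union bound. Since $\{L = k\} \subseteq \{M_1 = k\} \cup \{M_2 = k\}$,
\[
\mathbb{E}[R_L C_L] \le \sum_{k \ge 1} \mathbb{E}[R_k C_k \, \mathbf{1}_{M_1 = k}] + \sum_{k \ge 1} \mathbb{E}[R_k C_k \, \mathbf{1}_{M_2 = k}],
\]
and the two summands are symmetric under swapping the roles of rows and columns, so it suffices to bound the second.

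Second, I would exploit row--column independence to factor the expectation: $R_k$ depends only on the $h_i$, while $(C_k, \mathbf{1}_{M_2 = k})$ depends only on the $h'_j$, hence $\mathbb{E}[R_k C_k \mathbf{1}_{M_2 = k}] = \mathbb{E}[R_k] \cdot \mathbb{E}[C_k \mathbf{1}_{M_2 = k}]$. The trivial bound $\mathbb{E}[R_k] = n_1 p^{k-1} \le n_1$ lets me pull $n_1$ out of the sum. The key identity is that on the event $\{M_2 = k\}$ we have $C_k = C_{M_2}$, so
\[
\sum_{k \ge 1} \mathbb{E}[C_k \, \mathbf{1}_{M_2 = k}] = \mathbb{E}[C_{M_2}] \le \frac{1}{p},
\]
the last inequality being Lemma~\ref{thm:maxelems}. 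This bounds the second summand by $n_1/p$, and the symmetric argument bounds the first by $n_2/p$; hence $\mathbb{E}[R_L C_L] \le (n_1 + n_2)/p = O(n)$, and summing over the two non-diagonal pieces gives the desired $O(n)$.

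The main subtlety is recognizing that Lemma~\ref{thm:maxelems} must be applied \emph{after} summing over $k$, not termwise. The naive termwise bound $\mathbb{E}[C_k \mathbf{1}_{M_2 = k}] \le \mathbb{E}[C_k] = n_2 p^{k-1}$ would yield only $\sum_k n_1 n_2 p^{2(k-1)} = O(n_1 n_2)$, which is $\Theta(n^2)$ in the balanced case and useless. Once one writes $C_k \mathbf{1}_{M_2 = k} = C_{M_2} \mathbf{1}_{M_2 = k}$, the sum over $k$ collapses to the single expectation $\mathbb{E}[C_{M_2}]$, which Lemma~\ref{thm:maxelems} controls by a constant. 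The remaining work is only bookkeeping: extracting from Definition~\ref{dfn:2dskiplist} that the top layer of $C^\pi_{12}$ really has size $R_{\min(M_1, M_2)} C_{\min(M_1, M_2)}$, and checking that the heights inside the piece are genuinely i.i.d.\ geometrics independent of the cut location (which follows because heights are intrinsic attributes of Euler-tour entries, unchanged by \textsc{Cut}).
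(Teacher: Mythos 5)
Your proposal is correct and is essentially the paper's own argument: both reduce membership in the top layer to the event that the node's row attains the row-height maximum or its column attains the column-height maximum, apply a union bound, and invoke Lemma~\ref{thm:maxelems} to bound the expected number of maxima by $1/p$, yielding the same $(n_1+n_2)/p$ bound. The only difference is bookkeeping --- you sum over levels $k$ and factor by row/column independence to collapse $\sum_k \mathbb{E}[C_k\mathbf{1}_{M_2=k}]$ into $\mathbb{E}[C_{M_2}]$, whereas the paper sums per-node indicators $I_{ij}$ directly --- which amounts to the identical computation.
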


\begin{proof}
    Suppose constant $p\in (0, 1)$ is the parameter of the Skip Orthogonal List.
    Suppose the 2 non-diagonal pieces have size $m\times k$ and $k\times m$ respectively.
    By the properties of procedure \textsc{Cut},
    $m + k = n - 2$.

    For the $m\times k$ piece, denote the height variable corresponding to the rows as $\qty{a_i}_{i = 1}^m$
    and the height variable corresponding to the columns as $\qty{b_j}_{j = 1}^k$.
    Therefore, $\qty{a_i}_{i = 1}^m$ and $\qty{b_j}_{j = 1}^k$ are i.i.d. following geometric distribution with parameter $p$.
    The height of the entire Skip Orthogonal List is $\min\qty(\max_{i = 1}^ma_i, \max_{j = 1}^kb_j)$.

    If node $(i', j')$, i.e., node on the $i'$-th row and $j'$-th column, is on the top level, then $\min(a_{i'}, b_{j'}) = \min\qty(\max_{i = 1}^ma_i, \max_{j = 1}^kb_j)$, which is implied by $a_{i'} = \max_{i = 1}^ma_i\vee b_{j'} = \max_{j = 1}^kb_j$, i.e., $a_{i'}$ is the maximum element of $\qty{a_i}_{i = 1}^m$ or $b_{j'}$ is the maximum element of $\qty{b_j}_{j = 1}^k$.

    Let $I_{ij}$ be the indicator variable that node $(i, j)$ is on the top level.
    Let $I^{(r)}_i$ be the indicator variable that $a_i$ is the maximum element of $\qty{a_i}_{i = 1}^m$ and $I^{(c)}_j$ be the indicator that $b_j$ is the maximum element of $\qty{b_j}_{j = 1}^k$.
    By linearity expectation, the number of elements on the top level is $\sum_{i = 1}^m\sum_{j = 1}^k\mathbb E\qty[I_{ij}]$,
    the number of maximum elements of array $\qty{a_i}_{i = 1}^m$ is $\sum_{i = 1}^m\mathbb E\qty[I^{(r)}_i]$ and the maximum elements of array $\qty{b_j}_{j = 1}^k$ is $\sum_{j = 1}^k\mathbb E\qty[I^{(c)}_j]$.
    From the discussion above, by union bound, we have

    \begin{align*}
        \sum_{i = 1}^m\sum_{j = 1}^k\mathbb E\qty[I_{ij}]
        =&\ \sum_{i = 1}^m\sum_{j = 1}^k\Pr[\min(a_i, b_j) = \min\qty(\max_{i' = 1}^ma_{i'}, \max_{j' = 1}^kb_{j'})]\\
        \leq&\ \sum_{i = 1}^m\sum_{j = 1}^k\Pr[a_i = \max_{i' = 1}^ma_{i'}\vee b_j = \max_{j' = 1}^kb_{j'}]\\
        \leq&\ \sum_{i = 1}^m\sum_{j = 1}^k\Pr[a_i = \max_{i' = 1}^ma_{i'}] + \Pr[b_j = \max_{j' = 1}^kb_{j'}]\\
        \leq&\ k\sum_{i = 1}^m\mathbb E\qty[I^{(r)}_i] + m\sum_{j = 1}^k\mathbb E\qty[I^{(c)}_j]
    \end{align*}

    As $\sum_{i = 1}^mI^{(r)}_i$ equals to the total number of maximum values in $\qty{a_i}_{i = 1}^m$, by Lemma \ref{thm:maxelems},
    we have $\sum_{i = 1}^mI^{(r)}_i < \frac 1p$.
    Similarly, $\sum_{j = 1}^k\mathbb E\qty[I^{(c)}_j] < \frac 1p$.
    As a result, the total number of variables on the top level is smaller than $k\cdot \frac 1p + m\cdot \frac 1p < \frac np = O(n)$, which finishes the proof of Lemma \ref{thm:piecetop}.
\end{proof}

Further more, with Lemma \ref{thm:maxelems}, we can directly derive Lemma \ref{thm:top}.

\begin{lemma}
    \label{thm:top}
    The expected size of the top level of a Skip Orthogonal List $C^\pi$ is $O(1)$.
\end{lemma}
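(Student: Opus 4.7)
The plan is to observe that the size of the top level of the Skip Orthogonal List is governed entirely by the number of height random variables that attain the overall maximum, and to leverage the second-moment bound already proved in Lemma \ref{thm:maxelems}.

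Concretely, first I would identify the top level. By Definition \ref{dfn:2dskiplist}, level $\ell$ contains the elements of $S_\ell \times S_\ell$, where $S_\ell = \{s \in S : h_s \geq \ell\}$. Therefore the top (highest non-empty) level is $L \triangleq \max_{s \in S} h_s$, and at this level $S_L = \{s \in S : h_s = L\}$. If we let $K \triangleq |S_L|$ denote the number of indices attaining the maximum height, then the top level contains exactly $K^2$ nodes.

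Next I would invoke Lemma \ref{thm:maxelems}. Since the heights $\{h_s\}_{s \in S}$ are i.i.d.\ geometric random variables with parameter $p$, the lemma applied with $n = |S|$ yields
\begin{equation*}
\mathbb{E}[K^2] = \mathbb{E}\qty[\abs{\qty{s \in S : h_s = \max_{s' \in S} h_{s'}}}^2] < \frac{2}{p^2} - \frac{1}{p}.
\end{equation*}
Since $p$ is a constant of the data structure, the right-hand side is $O(1)$. By linearity of expectation, the expected size of the top level is $\mathbb{E}[K^2] = O(1)$, which is the claim.

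There is no real obstacle here: the lemma statement was designed precisely for this application, and the only conceptual step is recognizing that in the 2D product structure the top level is a square $S_L \times S_L$ whose side length equals the number of arg-maxes of a geometric sample. The main thing to double-check in the write-up is that ``top level'' is interpreted as the highest level with at least one node (rather than, say, a level indexed by absolute height), so that it coincides with $L = \max_s h_s$ and the counting above applies directly.
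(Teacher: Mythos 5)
Your proposal is correct and matches the paper's intended derivation: the paper states only that Lemma \ref{thm:top} follows ``directly'' from Lemma \ref{thm:maxelems}, and your filling-in --- the top non-empty level is $S_L\times S_L$ with $\abs{S_L}=K$ the number of arg-maxes of the i.i.d.\ geometric heights, so its expected size is $\mathbb E\qty[K^2] < \frac{2}{p^2}-\frac 1p = O(1)$ --- is exactly the reason the second-moment bound was included in that lemma. Your closing remark about interpreting ``top level'' as the highest non-empty level is also the right reading of Definition \ref{dfn:2dskiplist}.
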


Now, with Lemma \ref{thm:piecetop} and \ref{thm:top}, the time complexity for procedure \textsc{RangeUpdate} and \textsc{Query} are directly obtained.

\section{Supplementary Experiments}

Similar to the previous experiments, the experimental results in this section are obtained from the same server with the same dataset pools,
i.e. Synthetic dataset about Gaussian Distributions on $\mathbb R^{784}$ and MNIST dataset \cite{lecun2010mnist}.
Our algorithm is compared with \textbf{Network Simplex} Algorithm \cite{orlin1997polynomial} and \textbf{Sinkhorn} Algorithm \cite{cuturi2013sinkhorn} by Python Optimal Transport Library \cite{flamary2021pot}.
For test \textit{Weight Modification}, we randomly select a pair of nodes in the system, and send flow of which amount is in the range of $\pm 1\%$ of their node weight difference, repeat for 100 times;
for test \textit{Point Deletion \& Insertion}, we first delete 100 nodes with 0 weight in the current data set to return it to the pool, and insert 100 new nodes in the system.
\begin{figure}[htbp]
      \centering
    \includegraphics[width = \linewidth]{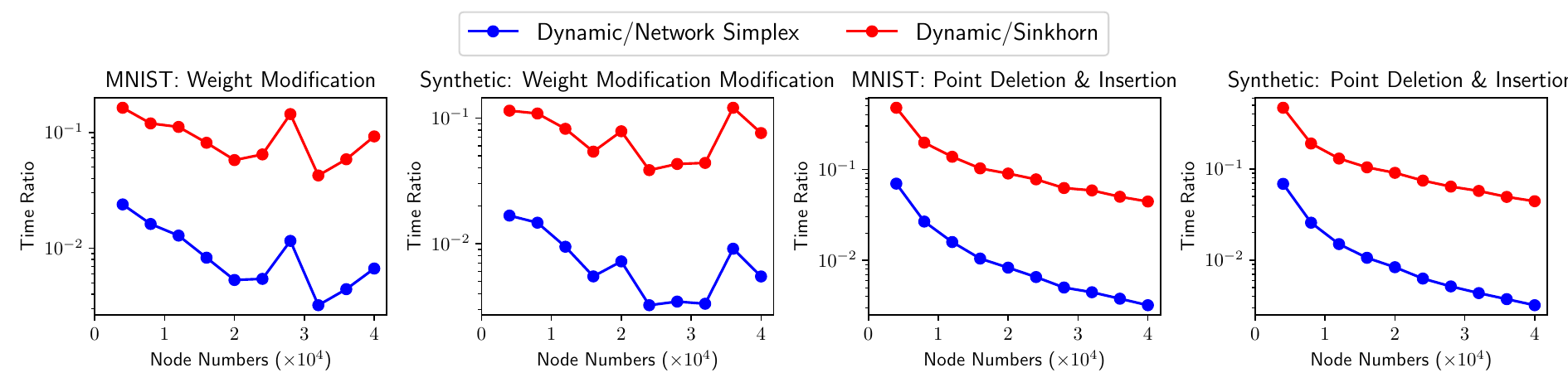}
    \caption{Supplementary Experiments}
    \label{fig:supplementresult}
\end{figure}

Figure \ref{fig:supplementresult} demonstrates our experiment result,
where the time ratio of our algorithm and their algorithm are plotted.
From the figure, the asymptotic advantage of our algorithm is obvious.
where $T_{\rm our}$ is the running time of each dynamic operation of our algorithm, while $T_{\rm net}$ and $T_{\rm sinkhorn}$ are the static running time of Network Simplex algorithm and Sinkhorn algorithm respectively. Our algorithm for handling point insertion only takes less than $1\%$ of time compared to static algorithms when $\abs{V}$ reaches $40000$.
Also the \textsc{Insert} and \textsc{Delete} procedures perform much more stable than space position modification and weight modification procedures,
regardless of which dataset we use.
This matches the theory that procedure \textsc{Insert} and \textsc{Delete} is strictly linear,
while the performance of normal space position modification and weight modification procedures depends on the number of simplex iterations it experiences.
Though procedure \textsc{Insert} and \textsc{Delete} must be combined with weight modification procedures to obtain the result,
Figure \ref{fig:supplementresult} shows that procedure \textsc{Insert} and \textsc{Delete} are not a great overhead.
\end{appendices}

\bibliographystyle{plainnat}
\bibliography{dynamic_ot_preprint}


\end{document}